\theoremstyle{definition}
\theoremstyle{remark}
\newenvironment{customthm}[1]
	{\innercustomthm}
	{\endinnercustomthm}
\DeclareMathOperator*{\argmin}{arg\,\min}
\renewcommand\footnotetextcopyrightpermission[1]{} 
\begin{document}

\title{Anytime Heuristic for Weighted Matching Through Altruism-Inspired Behavior}

\author{Panayiotis Danassis, Aris Filos-Ratsikas, Boi Faltings}
\affiliation{\institution{Artificial Intelligence Laboratory (LIA), \'Ecole Polytechnique F\'ed\'erale de Lausanne (EPFL)}}
\email{{panayiotis.danassis, aris.filosratsikas, boi.faltings}@epfl.ch}

\begin{abstract}
	We present a novel anytime heuristic (ALMA), inspired by the human principle of altruism, for solving the assignment problem. ALMA is decentralized, completely uncoupled, and requires no communication between the participants. We prove an upper bound on the convergence speed that is polynomial in the desired number of resources and competing agents per resource; crucially, in the realistic case where the aforementioned quantities are bounded independently of the total number of agents/resources, the convergence time remains \emph{constant} as the total problem size increases. 
	
	We have evaluated ALMA under three test cases: (i) an anti-coordination scenario where agents with similar preferences compete over the same set of actions, (ii) a resource allocation scenario in an urban environment, under a constant-time constraint, and finally, (iii) an on-line matching scenario using \emph{real} passenger-taxi data. In all of the cases, ALMA was able to reach high social welfare, while being orders of magnitude faster than the centralized, optimal algorithm. The latter allows our algorithm to scale to realistic scenarios with hundreds of thousands of agents, e.g., vehicle coordination in urban environments.
\end{abstract}

\keywords{Coordination and Cooperation; Resource Allocation; Multi-agent Learning}  

\maketitle

\section{Introduction} \label{Introduction}

One of the most relevant problems in multi-agent systems (MAS) is finding an optimal allocation between agents. This pertains to role allocation (e.g., team formation for autonomous robots \cite{GUNN201322}), task assignment (e.g., employees of a factory, taxi-passenger matching \cite{varakantham2012decision}), resource allocation (e.g., parking spaces and/or charging stations for autonomous vehicles \cite{geng2013new}), etc. What follows is \emph{applicable to any such scenario}, but for concreteness we will refer to the allocation of a set of resources to a set of agents, a setting known as the \emph{assignment problem}, one of the most fundamental combinatorial optimization problems \cite{munkres1957algorithms}.

When designing algorithms for assignment problems, a significant challenge emerges from the nature of real-world applications, which is often distributed and information-restrictive. For the former part, a variety of decentralized algorithms have been developed \cite{giordani2010distributed,7991447,zavlanos2008distributed,burger2012distributed}, all of which, though, require polynomial in the problem size number of messages. However, inter-agent interactions often repeat no more than a few hundreds of times. Moreover, sharing plans and preferences creates high overhead, and there is often a lack of responsiveness and/or communication between the participants \cite{AAAI10-adhoc}. Achieving fast convergence and high efficiency in such information-restrictive settings is extremely challenging. Yet, humans are able to routinely and robustly coordinate in similar everyday scenarios. One driving factor that facilitates human cooperation is the principle of \emph{altruism} \cite{doi:10.1162/003355302760193904,nowak2005evolution,gintis2000strong}. Inspired by human behavior, the proposed heuristic is modeled on the principle of altruism. This results to fast convergence to highly efficient allocations, without any communication between the agents.

A distinctive characteristic of ALMA is that agents make decisions locally, based on (i) the contest for resources that \emph{they} are interested in, (ii) the agents that are interested in the \emph{same} resources. If each agent is interested in only a \emph{subset} of the total resources, ALMA converges in time polynomial in the maximum size of the subsets; not the total number of resources. In particular, if the size of each subset is a constant fraction of the total number of resources, then the convergence time is \emph{constant}, in the sense that it does not grow with the problem size. The same is not true for other algorithms (e.g., the optimal centralized solution) which require time polynomial in the \emph{total} number of agents/resources, even if the aforementioned condition holds. The condition holds by default in many real-world applications; agents have only local knowledge of the world, there is typically a cost associated with acquiring a resource, or agents are simply only interested in resources in their vicinity (e.g., urban environments). This is important, as the proposed approach avoids  having to artificially split the problem in subproblems (e.g, by placing bounds or spatial constraints) and solve those separately, in order to make it tractable. Instead, ALMA utilizes a natural domain characteristic, instead of an artificial optimization technique (i.e., artificial bounds). Coupled to the convergence time, the decentralized nature of ALMA makes it applicable to large-scale, real-world applications (e.g., IoT devices, intelligent infrastructure, autonomous vehicles, etc.).

\subsection{Our Results}

Our main contributions in this paper are:

\textbf{(1)} We introduce a novel, anytime \textbf{AL}truistic \textbf{MA}tching heuristic (\textbf{ALMA}) for solving the assignment problem. ALMA is decentralized, completely uncoupled (i.e., each agent is only aware of his own history of action/reward pairs \cite{DBLP:journals/corr/Talebi13}), and requires no communication between the agents.

\textbf{(2)} We prove that if we bound the maximum number of resources an agent is interested in, and the maximum number of agents competing for a resource, the expected number of steps for any agent to converge is independent of the total problem size. Thus, we do not require to artificially split the problem, or similar techniques, to render it manageable.
	
\textbf{(3)} We provide a thorough empirical evaluation of ALMA on both synthetic and real data. In particular, we have evaluated ALMA under three test cases: (i) an anti-coordination scenario where agents with similar preferences compete over the same set of actions, (ii) a resource allocation scenario in an urban environment, under a constant-time constraint, and finally, (iii) an on-line matching scenario using \emph{real} passenger-taxi data. In all of the cases, ALMA achieves high social welfare (total satisfaction of the agents) as compared to the optimal solution, as well as various other algorithms.


\subsection{Related Work} \label{Related Work}

The assignment problem consists of finding a maximum weight matching in a weighted bipartite graph and it is one of the best-studied combinatorial optimization problems in the literature. The first polynomial time algorithm (with respect to the total number of nodes, and edges) was introduced by Jacobi in the 19th century \cite{borchardt1865investigando,ollivier2009looking}, and was succeeded by many classical algorithms \cite{munkres1957algorithms,Edmonds:1972:TIA:321694.321699,bertsekas1979distributed} with the \emph{Hungarian} algorithm of \cite{kuhn1955hungarian} being the most prominent one (see \cite{su2015algorithms} for an overview). The problem can also be solved via linear programming \cite{dantzig1990origins}, as its LP formulation relaxation admits integral optimal solutions \cite{Papadimitriou:1982:COA:31027}. In Section \ref{testcase 3}, we will apply ALMA on a non-bipartite setting, which corresponds to the more general \emph{maximum weight matching} problem on general graphs. To compute the optimal in this case, we will use the \emph{blossom algorithm} of \cite{edmonds1965maximum} (see \cite{lovasz2009matching}).

In reality, a centralized coordinator is not always available, and if so, it has to know the utilities of all the participants, which is often not feasible. In the literature of the assignment problem, there also exist several decentralized algorithms (e.g., \cite{giordani2010distributed,7991447,zavlanos2008distributed,burger2012distributed} which are the decentralized versions of the aforementioned well-known centralized algorithms). However, these algorithms require polynomial computational time and polynomial number of messages (such as cost matrices \cite{7991447}, pricing information \cite{zavlanos2008distributed}, or a basis of the LP \cite{burger2012distributed}, etc.). Yet, agent interactions often repeat no more than a few hundreds of times. To the best of our knowledge, a decentralized algorithm that requires no message exchange (i.e., no communication network) between the participants, and achieves high efficiency, like ALMA does, has not appeared in the literature before. Let us stress the importance of such a heuristic: as autonomous agents proliferate, and their number and diversity continue to rise, differences between the agents in terms of origin, communication protocols, or the existence of sub-optimal, legacy agents will bring forth the need to collaborate without any form of explicit communication \cite{AAAI10-adhoc}. Finally, inter-agent communication creates high overhead as well.

ALMA is inspired by the decentralized allocation algorithm of \cite{DANASSIS:2019}. Using such a simple learning rule which only requires environmental feedback, allows our approach to scale to hundreds of thousands of agents. Moreover, it does not require global knowledge of utilities; only local knowledge of personal utilities (in fact, we require knowledge of pairwise differences which are far easier to estimate).

\section{Altruistic Matching Heuristic} \label{Proposed Approach}

In this section, we define ALMA and prove its convergence properties. We begin with the definition of the assignment problem, and its interpretation in our setting.

\subsection{The Assignment Problem}

The assignment problem consists of finding a maximum weight matching in a weighted bipartite graph, $\mathcal{G} = \left\{ \mathcal{N} \cup \mathcal{R}, \mathcal{E} \right\}$. In the studied scenario, $\mathcal{N} = \{1, \dots, N\}$ agents compete to acquire $\mathcal{R} = \{1, \dots, R\}$ resources. We assume that each agent $n$ is interested in a subset of the total resources, i.e., $\mathcal{R}^n \subset \mathcal{R}$. The weight of an edge $(n, r) \in \mathcal{E}$ represents the utility ($u_n(r)$) agent $n$ receives by acquiring resource $r$. Each agent can acquire at most one resource, and each resource can be assigned to at most one agent. The goal is to maximize the social welfare (sum of utilities), i.e., 


\begin{equation}\label{eq:optimization}
	\begin{aligned}
		\max_{\mathbf{x} \geq 0}
		& \sum_{(n,r) \in \mathcal{E}} u_{n, r} x_{n, r} \\
		\text{subject to}
		& \sum_{r | (n,r) \in \mathcal{E}} x_{n, r} = 1, \forall n \in \mathcal{N} \\
		& \sum_{n | (n,r) \in \mathcal{E}} x_{n, r} = 1, \forall r \in \mathcal{R}
	\end{aligned}
\end{equation}

\subsection{Learning Rule} \label{Learning Rule}

This section describes the proposed heuristic (\textbf{ALMA}: \textbf{AL}truistic \textbf{MA}tching heuristic) for weighted matching. We make the following two assumptions: First, we assume (possibly noisy) knowledge of personal preferences by each agent. Second, we assume that agents can observe feedback from their environment. This is used to inform collisions and detect free resources. It could be achieved by the use of visual, auditory, olfactory sensors etc., or by any other means of feedback from the resource (e.g., by sending an occupancy message). Note here that these messages would be between the requesting agent and the resource, not between the participating agents themselves, and that it suffices to send only 1 bit of information (e.g., 0, 1 for occupied / free respectively). 

ALMA learns the right action through repeated trials as follows. Each agent sorts his available resources (possibly $\mathcal{R}^n \subseteq \mathcal{R}$) in decreasing order of utility ($r_1, r_2, \dots,$ $r_i, r_{i + 1}, \dots, r_{R^n}$). The set of available actions is denoted as $\mathcal{A} = \{Y, A_{r_1}, \dots, A_{r_{R^n}}\}$, where $Y$ refers to yielding, and $A_r$ refers to accessing resource $r$. Each agent has a strategy ($g_n$) that points to a resource and it is initialized to the most preferred one. As long as an agent has not acquired a resource yet, at every time-step, there are two possible scenarios. If $g_n = A_r$ (strategy points to resource $r$), then agent $n$ attempts to acquire that resource. If there is a collision, the colliding parties back-off with some probability. Otherwise, if $g_n = Y$, the agent choses a resource $r$ for monitoring. If the resource is free, he sets $g_n \leftarrow A_r$. Alg. \ref{algo: learning rule} presents the pseudo-code of ALMA, which is followed by every agent individually. The back-off probability and the next resource to monitor are computed individually and locally based on the current resource and each agent's utilities, as will be explained in the following section. Finally, note that if the available resources change over time, the agents simply need to sort again the currently available ones.

\subsection{Back-off Probability \& Resource Selection} \label{Back-off Probability & Resource Selection}

Let $\mathcal{R}$ be totally ordered in decreasing utility under $\prec_n$, $\forall n \in \mathcal{N}$. If more than one agent compete for resource $r_i$ (step 4 of Alg. \ref{algo: learning rule}), each of them will back-off with probability that depends on their utility loss of switching to their respective remaining resources. The loss is given by Eq. \ref{Eq: loss}.

\begin{equation} \label{Eq: loss}
	loss_n^i = \frac{\underset{j = i + 1}{\overset{k}{\sum}} u_n(r_i) - u_n(r_j)}{k - i}
\end{equation}

\noindent
where $k \in \{i + 1, \dots, R^n\}$ denotes the number of remaining resources to be considered. For $k = i + 1$, the formula only takes into account the utility loss of switching to the immediate next best resource, while for $k = R^n$ it takes into account the average utility loss of switching to all of the remaining resources. In the remainder of the paper we assume $k = i + 1$, i.e., $loss_n^i = u_n(r_i) - u_n(r_{i + 1})$. The actual back-off probability can be computed with any monotonically decreasing function $f$ on $loss_n$, i.e., $P_n(r_i, \prec_n) = f_n(loss_n^i)$. In the evaluation section, we have used two such functions, a linear (Eq. \ref{Eq: linear}), and the logistic function (Eq. \ref{Eq: logistic}). The parameter $\epsilon$ places a threshold on the minimum / maximum back-off probability for the linear curve, while $\gamma$ determines the steepness of the logistic curve.

\begin{equation} \label{Eq: linear}
	f(loss) =
	\begin{cases}
		1 - \epsilon, & \text{ if } loss \leq \epsilon \\
		\epsilon, & \text{ if } 1 - loss \leq \epsilon \\
		1 - loss, & \text{ otherwise}
	\end{cases}
\end{equation}

\begin{equation} \label{Eq: logistic}
	f(loss) = \frac{1}{1 + e^{-\gamma(0.5 - loss)}}
\end{equation}

Using the aforedescribed rule, agents that do not have good alternatives will be less likely to back-off and vice versa. The ones that do back-off select an alternative resource and examine its availability. The resource selection is performed in sequential order, i.e., $S_n(r_{\text{prev}}, \prec_n) = r_{\text{prev} + 1}$, where $r_{\text{prev}}$ denotes the resource selected by that agent in the previous round. We also examined the possibility of using a weighted or uniformly at random selection, but achieved inferior results.

\subsection{Altruism-Inspired Behavior} \label{Inspiration}

ALMA is inspired by the human principle of altruism. We would expect an altruistic person to give up a resource either to someone who values it more, if that resulted in an improvement of the well-being of society \cite{doi:10.1162/003355302760193904}, or simply to be nice to others \cite{simon2016existence}. Such behavior is especially common in situations where the backing-off subject has equally good alternatives. For example, in human pick-up teams, each player typically attempts to fill his most preferred position. If there is a collision, a colliding player might back-off because his teammate is more competent in that role, or because he has an equally good alternative, or simply to be polite; the player backs-off now and assumes that role at some future game. From an alternative viewpoint, following such an altruistic convention leads to a faster convergence which outweighs the loss in utility. Such conventions allow humans to routinely and robustly coordinate in large scale and under dynamic and unpredictable demand. Behavioral conventions are a fundamental part of human societies \cite{lewis2008convention}, yet they have not appeared meaningfully in empirical modeling of multi-agent systems. Inspired by human behavior, ALMA attempts to reproduce these simple rules in an artificial setting.

\begin{algorithm}[!t]
	\caption{ALMA: Altruistic Matching Heuristic.} \label{algo: learning rule}
	\begin{algorithmic}[1]
		\Require Sort resources ($\mathcal{R}^n \subseteq \mathcal{R}$) in decreasing order of utility $r_1, r_2, \dots, r_i, r_{i + 1}, \dots, r_{R^n}$.
		\Require Initialize $g_n \leftarrow A_{r_1}$, and $r_{\text{prev}} \leftarrow r_1$.

		\Procedure{ALMA}{}
			\If{ $g_n = A_r$}
				\State Agent $n$ attempts to acquire resource $r$. Set $r_{\text{prev}} \leftarrow r$.
				\If{Collision($r$)}
				\State back-off (set $g_n \leftarrow Y$) with probability $P_n(r, \prec_n)$.
				\EndIf
			\Else { ($g_n = Y$)}
				\State Agent $n$ monitors $r \leftarrow S_n(r_{\text{prev}}, \prec_n)$. Set $r_{\text{prev}} \leftarrow r$.
				\If{Free($r$)} set $g_n \leftarrow A_r$.
				\EndIf
			\EndIf    
		\EndProcedure
	\end{algorithmic}
\end{algorithm}

\subsection{Convergence} \label{Convergecne}

Agents who have not acquired a resource ($g_n = Y$) will not claim an occupied one. Additionally, every time a collision happens, there is a positive probability that some agents will back-off. As a result, the system will converge. The following theorem proves that the expected convergence time is logarithmic in the number of agents $N$ and quadratic in the number of resources $R$.

\begin{theorem} \label{Th: convergence system}
For $N$ agents and $R$ resources, the expected number of steps until the system of agents following Alg. \ref{algo: learning rule} converges to a complete matching is bounded by (\ref{Eq: convergence bound system}), where $p^* = f(loss^*)$, and $loss^*$ is given by Eq. \ref{Eq: loss* system}.

\begin{equation} \label{Eq: convergence bound system}
	\mathcal{O}\left( R \frac{2 - p^*}{2 (1 - p^*)} \left(\frac{1}{p^*} \log N + R \right) \right) 
\end{equation}

\begin{equation} \label{Eq: loss* system}
	loss^* = \underset{loss_n^r}{\argmin} \left( \underset{r \in \mathcal{R}, n \in \mathcal{N}}{\min}(loss_n^r), 1 - \underset{r \in \mathcal{R}, n \in \mathcal{N}}{\max}(loss_n^r) \right)
\end{equation}
\end{theorem}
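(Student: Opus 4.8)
The plan is to exploit a monotonicity property of the dynamics to impose a clean phase structure, and then to bound the expected length of each phase by analyzing contention at a single resource under a worst-case back-off probability. The key structural observation is that once an agent acquires a resource without a collision it stops acting, and the resource is thereafter reported as occupied, so no monitoring agent can ever claim it; hence a cleanly matched (agent, resource) pair is permanent. Consequently the matching can only grow, and the whole run decomposes into at most $R$ phases, each ending when one additional resource becomes permanently matched. This monotonicity is what lets me chain a per-resource argument into a global bound, despite the fact that backed-off agents migrate between resources.

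Before analyzing a phase, I would reduce the heterogeneous system to a homogeneous one. Because $f$ is monotonically decreasing, the per-agent back-off probabilities lie in a range whose extremes are controlled by the smallest loss (largest back-off probability) and the largest loss (smallest back-off probability). The convergence time degrades both when back-off is too rare (collisions persist) and when it is too frequent (every contender yields and the resource oscillates between free and contested); $loss^*$, and hence $p^* = f(loss^*)$, is defined precisely to select the more damaging of these two extremes. I would make this rigorous with a stochastic-dominance / coupling argument showing that replacing every agent's probability by $p^*$ can only slow convergence, so the resulting homogeneous bound is an upper bound for the true system.

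The core of the proof is a single-resource contention lemma. Fixing a contested resource with $m \le N$ contenders, I model the size of the contending set as a Markov chain: in one step each contender yields independently with probability $p^*$, so with probability $(1-p^*)^m$ no one yields (the set is unchanged), with probability $(p^*)^m$ everyone yields (the resource momentarily frees and contention may restart -- the reset case), and otherwise the set strictly shrinks. In expectation the set contracts by a factor of about $(1-p^*)$ per effective step, so starting from at most $N$ contenders it reaches a single surviving claimant in $O\!\left(\frac{1}{p^*}\log N\right)$ steps; the factor $\frac{2-p^*}{2(1-p^*)}$ emerges as the expected number of steps per effective contraction once the wasted all-yield resets are discounted through a geometric series (it tends to $1$ as $p^*\to 0$ and blows up as $p^*\to 1$, exactly matching the penalty for over-frequent yielding). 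The additive $R$ term accounts for the monitoring traversal: a yielding agent probes resources in sequential order $S_n$, so it needs at most $R$ probing steps to relocate to its next relevant resource or to find a free one. Summing the $\frac{2-p^*}{2(1-p^*)}\!\left(\frac{1}{p^*}\log N + R\right)$ expected cost of a phase over the at most $R$ phases yields the claimed bound.

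The step I expect to be the main obstacle is controlling the coupling across resources together with the reset case. Because an agent that yields at one resource reappears as a contender (or monitor) elsewhere, the contention sets at different resources are not independent, and an all-yield reset can reintroduce contention at a resource I had hoped was settling. The delicate part is therefore to argue that, despite this migration and these resets, each phase still makes guaranteed expected progress -- one new permanent match -- in the stated time. I would address this by charging every step either to a strict reduction of some contention set or to a monitoring move, bounding the number of resets via the geometric factor $\frac{2-p^*}{2(1-p^*)}$, and invoking the permanence of matched pairs so that progress, once made, is never undone.
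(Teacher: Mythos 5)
Your plan follows essentially the same route as the paper's proof: a single-resource Markov chain on the number of contenders giving an $\mathcal{O}\left(\frac{1}{p}\log N\right)$ hitting time, a geometric bound of $\frac{2-p}{2(1-p)}$ on the expected number of all-yield restarts, an additive $R$ for the monitoring traversal, a multiplicative $R$ over resources justified by the permanence of clean matches, and a reduction of the heterogeneous back-off probabilities to the worst-case extreme $p^*$. The cross-resource coupling you flag as the main obstacle is handled in the paper exactly as you propose, via the observation that occupied resources are never re-contested so contention at each resource only shrinks within a round.
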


\begin{proof}
	To improve readability, we will only provide a sketch of the proof. Please see the appendix for the complete version. The proof is based on \cite{cigler2011reaching,DANASSIS:2019}.

	We first assume that every agent, on every collision, backs-off with the same constant probability $p$. We start with the case of having $N$ agents competing for 1 resource and model our system as a discrete time Markov chain. Intuitively, this Markov chain describes the number of individuals in a decreasing population, but with two caveats: the goal (absorbing state) is to reach a point where only one individual remains, and if we reach zero, we restart. We prove that the expected number of steps until we reach a state where either 1 or 0 agents compete for that resource is $\mathcal{O}\left( \frac{1}{p} \log N \right)$. Moreover, we prove that with high probability, $\Omega\left( \frac{2(1 - p)}{2 - p} \right)$, only 1 agent will remain (contrary to reaching 0 and restarting the process of claiming the resource), no matter the initial number of agents. Having proven that, we move to the general case of $N$ agents competing for $R$ resources.

	At any time, at most $N$ agents can compete for each resource. We call this period a round. During a round, the number of agents competing for a specific resource monotonically decreases, since that resource is perceived as occupied by non-competing agents. Let the round end when either 1 or 0 agents compete for the resource. This will require $\mathcal{O}\left( \frac{1}{p} \log N \right)$ steps. If all agents backed-off, it will take on average $R$ steps until at least one of them finds a free resource. We call this period a break. In the worst case, the system will oscillate between a round and a break. According to the above, one oscillation requires in expectation $\mathcal{O}\left( \frac{1}{p} \log N  + R\right)$ steps. If $R = 1$, as mentioned in the previous paragraph, in expectation there will be $\frac{2 - p}{2 (1 - p)}$ oscillations. For $R > 1$ the expected number of oscillations is bounded by $\mathcal{O}\left( R \frac{2 - p}{2 (1 - p)} \right)$. Thus, we conclude that if all the agents back-off with the same constant probability $p$, the expected number of steps until the system converges to a complete matching is $\mathcal{O}\left( R \frac{2 - p}{2 (1 - p)} \left(\frac{1}{p} \log N + R \right) \right)$.

	Next, we drop the constant probability assumption. Intuitively, the worst case scenario corresponds to either all agents having a small back-off probability, thus they keep on competing for the same resource, or all of them having a high back-off probability, thus the process will keep on restarting. These two scenarios correspond to the inner ($\frac{1}{p}$) and outer ($\frac{2 - p}{2 (1 - p)}$) probability terms of bound (\ref{Eq: convergence bound system}) respectively. Let $p^* = f(loss^*)$ be the worst between the smallest or highest back-off probability any agent $n \in \mathcal{N}$ can exhibit, i.e., having $loss^*$ given by Eq. \ref{Eq: loss* system}. Using $p^*$ instead of the constant $p$, we bound the expected convergence time according to bound (\ref{Eq: convergence bound system}).
\end{proof}

It is worth noting that the back-off probability $p^*$ in bound (\ref{Eq: convergence bound system}) does not significantly affect the convergence time. For example, using Eq. \ref{Eq: linear} with a quite small $\epsilon = 0.01$, the resulting quantities would be at most $100 R \log N$, and $50 R^2$. Most importantly, though, this is a rather loose bound (e.g., agents would rarely back-off with probabilities as extreme as $p^*$).

Apart from the convergence of the whole system, we are interested in the expected number of steps any individual agent would require in order to acquire a resource. In real-world scenarios, there is typically a cost associated with acquiring a resource. For example, a taxi driver would not be willing to drive to the other end of the city to pick up a low fare passenger. As a result, each agent is typically interested in a subset of the total resources, i.e., $\mathcal{R}^n \subset \mathcal{R}$, thus at each resource there is a bounded number of competing agents. Let $R^n$ denote the maximum number of resources agent $n$ is interested in, and $N^r$ denote the maximum number of agents competing for resource $r$. By bounding these two quantities (i.e., we consider $R^n$ and $N^r$ to be constant functions of $N$, $R$), Corollary \ref{Th: convergence agent} proves that the expected number of steps any individual agent requires in order to claim a resource is independent of the total problem size (i.e., $N$, and $R$), or, in other words, that the convergence time is \emph{constant} in these quantities.

\begin{corollary} \label{Th: convergence agent}
Let $R^n = |\mathcal{R}^n|$, such that $\forall r \in \mathcal{R}^n : u_n(r) > 0$, and $N^r = |\mathcal{N}^r|$ , such that $\forall n \in \mathcal{N}^r : u_n(r) > 0$. The expected number of steps until an agent $n \in \mathcal{N}$ following Alg. \ref{algo: learning rule} successfully acquires a resource is bounded by (\ref{Eq: convergence bound agent}), where $p_n^* = f(loss^\star)$ and $loss^\star$ is given by Eq. \ref{Eq: loss* agent}, independent of the total problem size $N$, $R$.

\begin{equation} \label{Eq: convergence bound agent}
	\mathcal{O}\left( \underset{n' \in \cup_{r \in \mathcal{R}^n} \mathcal{N}^r}{\max R^{n'}} \frac{2 - p_n^*}{2 (1 - p_n^*)} \left(\frac{1}{p_n^*} \log ( \underset{r \in \mathcal{R}^n}{\max} N^r ) + \underset{n' \in \cup_{r \in \mathcal{R}^n} \mathcal{N}^r}{\max R^{n'}} \right) \right) 
\end{equation}

\begin{equation} \label{Eq: loss* agent}
	loss^\star = \underset{loss_n^r}{\argmin} \left( \underset{r \in \mathcal{R}^n, n \in \mathcal{N}^r}{\min}(loss_n^r), 1 - \underset{r \in \mathcal{R}^n, n \in \mathcal{N}^r}{\max}(loss_n^r) \right)
\end{equation}
\end{corollary}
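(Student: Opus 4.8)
The plan is to \emph{localize} the argument of Theorem~\ref{Th: convergence system}: instead of running the round/break/oscillation analysis over the entire graph $\mathcal{G}$, I would run it over the neighborhood that agent $n$ can actually influence or be influenced by. Concretely, $n$ only ever attempts resources in $\mathcal{R}^n$, and at each such resource the only agents it can ever collide with are those in $\mathcal{N}^r$; hence the only agents whose behavior is relevant to $n$'s acquisition are the competitors $\mathcal{N}' := \bigcup_{r \in \mathcal{R}^n} \mathcal{N}^r$. The first step is therefore to argue that $n$'s acquisition time in the full system is dominated by its acquisition time in the subsystem induced by $\mathcal{N}'$, and that the external resources which members of $\mathcal{N}'$ may also want can only \emph{help} $n$, since they give competitors additional places to escape to, thereby vacating resources in $\mathcal{R}^n$ sooner. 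This monotonicity (coupling) observation is what lets me replace the global quantities in bound~\eqref{Eq: convergence bound system} by their local counterparts without loss.

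With the neighborhood fixed, I would re-instantiate the three building blocks from the proof of Theorem~\ref{Th: convergence system}. \textbf{(i) Round.} Whenever $n$ contends for $r \in \mathcal{R}^n$, the number of competitors is at most $N^r \le \max_{r \in \mathcal{R}^n} N^r$, so the single-resource Markov-chain bound gives a round length of $\mathcal{O}\!\left(\frac{1}{p_n^*}\log(\max_{r \in \mathcal{R}^n} N^r)\right)$, which supplies the logarithmic term of~\eqref{Eq: convergence bound agent}. \textbf{(ii) Break.} When the contenders back off, each searches sequentially through its own list, so the time before some resource in $\mathcal{R}^n$ is freed is bounded by the longest list among the relevant agents, namely $\max_{n' \in \mathcal{N}'} R^{n'}$. \textbf{(iii) Oscillations.} The number of round/break alternations before $n$ settles is bounded exactly as in the theorem, yielding the outer factor $\max_{n' \in \mathcal{N}'} R^{n'}\cdot\frac{2-p_n^*}{2(1-p_n^*)}$. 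Multiplying the per-oscillation cost by the number of oscillations reproduces~\eqref{Eq: convergence bound agent}.

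Finally I would pin down the back-off probability. Because the analysis is confined to the neighborhood, the extremal loss controlling the worst-case back-off is taken over the local index set, i.e.\ $loss^\star$ as defined in~\eqref{Eq: loss* agent} (minimizing over $r \in \mathcal{R}^n$, $n \in \mathcal{N}^r$) rather than over all of $\mathcal{R},\mathcal{N}$; setting $p_n^* = f(loss^\star)$ and substituting into the round and oscillation terms gives the stated bound. Since every quantity in~\eqref{Eq: convergence bound agent} is determined solely by $n$'s neighborhood, it stays bounded whenever $R^n$ and $N^r$ are, and is thus independent of the total $N$ and $R$, which is exactly the claim.

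The step I expect to be the main obstacle is the first one: rigorously formalizing the coupling claim that $n$ converges \emph{at least as fast} in the true system as in the isolated neighborhood subsystem. The difficulty is that members of $\mathcal{N}'$ may be coupled, through their external resources, to agents outside $\mathcal{N}'$, so the neighborhood is not a clean, self-contained Markov chain. I would address this by constructing an explicit coupling in which every back-off event of the restricted subsystem is mirrored in the full system, and arguing that any additional external back-off opportunities in the full system only accelerate the vacating of resources in $\mathcal{R}^n$, so that the restricted process stochastically upper-bounds the true acquisition time of $n$.
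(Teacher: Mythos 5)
Your proposal follows essentially the same route as the paper: bound $n$'s acquisition time by the convergence time of the sub-system on $\mathcal{R}^n$ and $\bigcup_{r \in \mathcal{R}^n}\mathcal{N}^r$, then instantiate Theorem~\ref{Th: convergence system} with $\max_{r\in\mathcal{R}^n}N^r$ agents, $\max_{n'}R^{n'}$ resources, and the local worst-case $loss^\star$. The coupling concern you flag is the one point the paper addresses only informally, noting that agents never contend for already-claimed resources (step 8 of Alg.~\ref{algo: learning rule}), so overlapping sub-systems do not delay $n$'s convergence; your plan to make this a stochastic-domination argument would only tighten what the paper asserts.
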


\begin{proof}
	The expected number of steps until an agent $n \in \mathcal{N}$ successfully acquires a resource is upper bounded by the total convergence time of the sub-system he belongs to, i.e., the sub-system consisting of the sets of $\mathcal{R}^n$ resources and $\cup_{r \in \mathcal{R}^n} \mathcal{N}^r$ agents. In such scenario, at most $\max_{r \in \mathcal{R}^n} N^r$ agents can compete for any resource. Using Theorem \ref{Th: convergence system} for $\max_{r \in \mathcal{R}^n} N^r$ agents, $\max_{n' \in \cup_{r \in \mathcal{R}^n} \mathcal{N}^r}R^{n'}$ resources, and worst-case $loss^\star$ given by any agent in $\cup_{r \in \mathcal{R}^n} \mathcal{N}^r$ (i.e., Eq \ref{Eq: loss* agent}) results in the desired bound. Note that agents do not compete for already claimed resources (step 8 of Alg. \ref{algo: learning rule}), thus the convergence of an agent does not require the convergence of agents of overlapping sub-systems.
\end{proof}

\begin{figure*}[t!]
	\centering
	\begin{subfigure}[t]{0.5\textwidth}
		\centering
		\includegraphics[width = 1 \linewidth]{./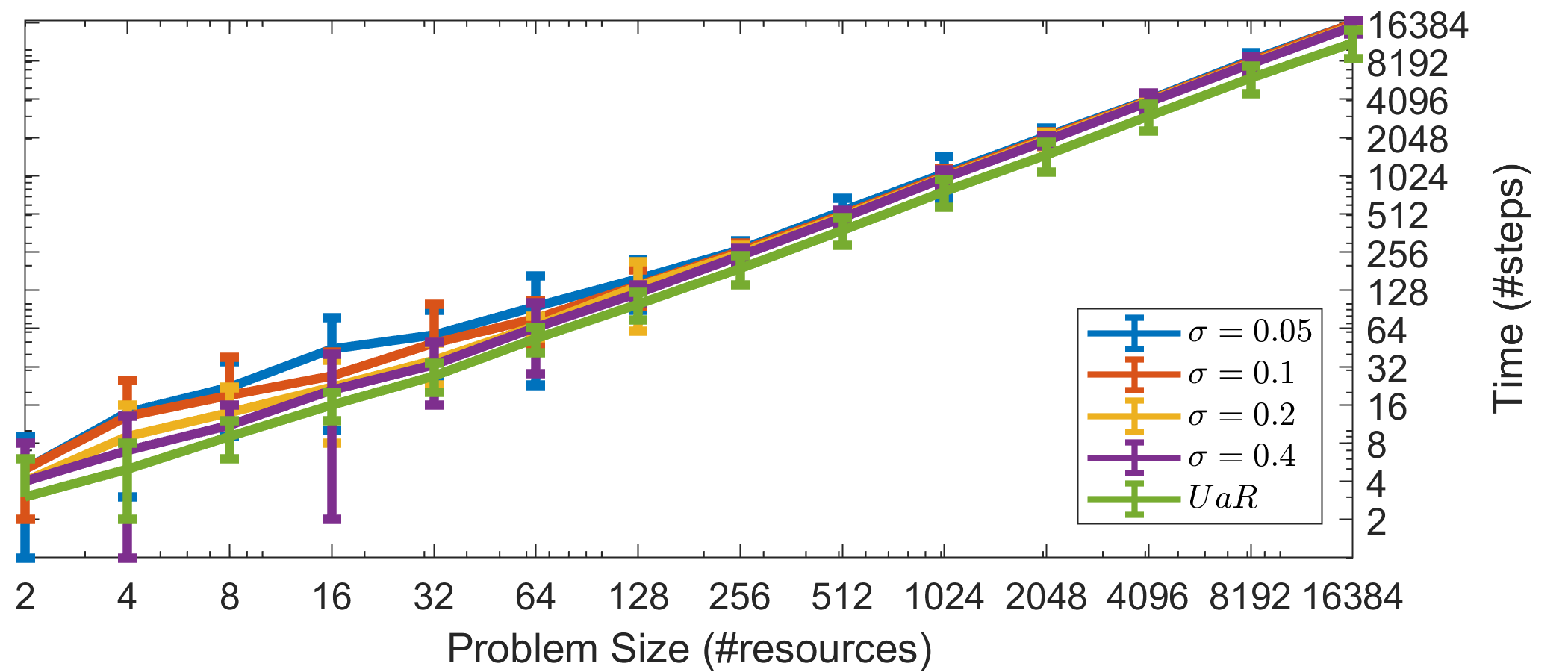}
		\caption{}
		\label{fig: testcase1_gaussian_uar_steps}
	\end{subfigure}%
	~ 
	\begin{subfigure}[t]{0.5\textwidth}
		\centering
		\includegraphics[width = 1 \linewidth]{./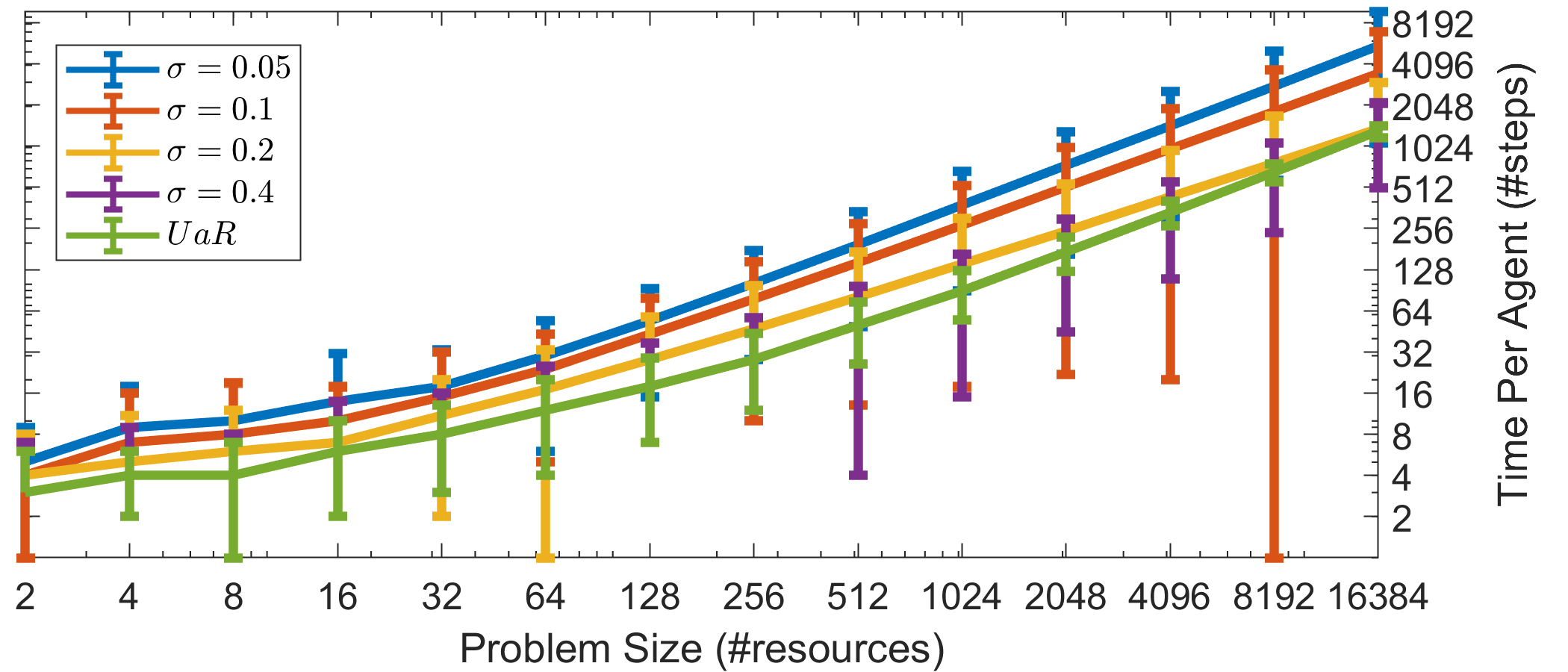}
		\caption{}
		\label{fig: testcase1_gaussian_uar_stepsPerAgent}
	\end{subfigure}
	~ \\
	\begin{subfigure}[t]{0.5\textwidth}
		\centering
		\includegraphics[width = 1 \linewidth]{./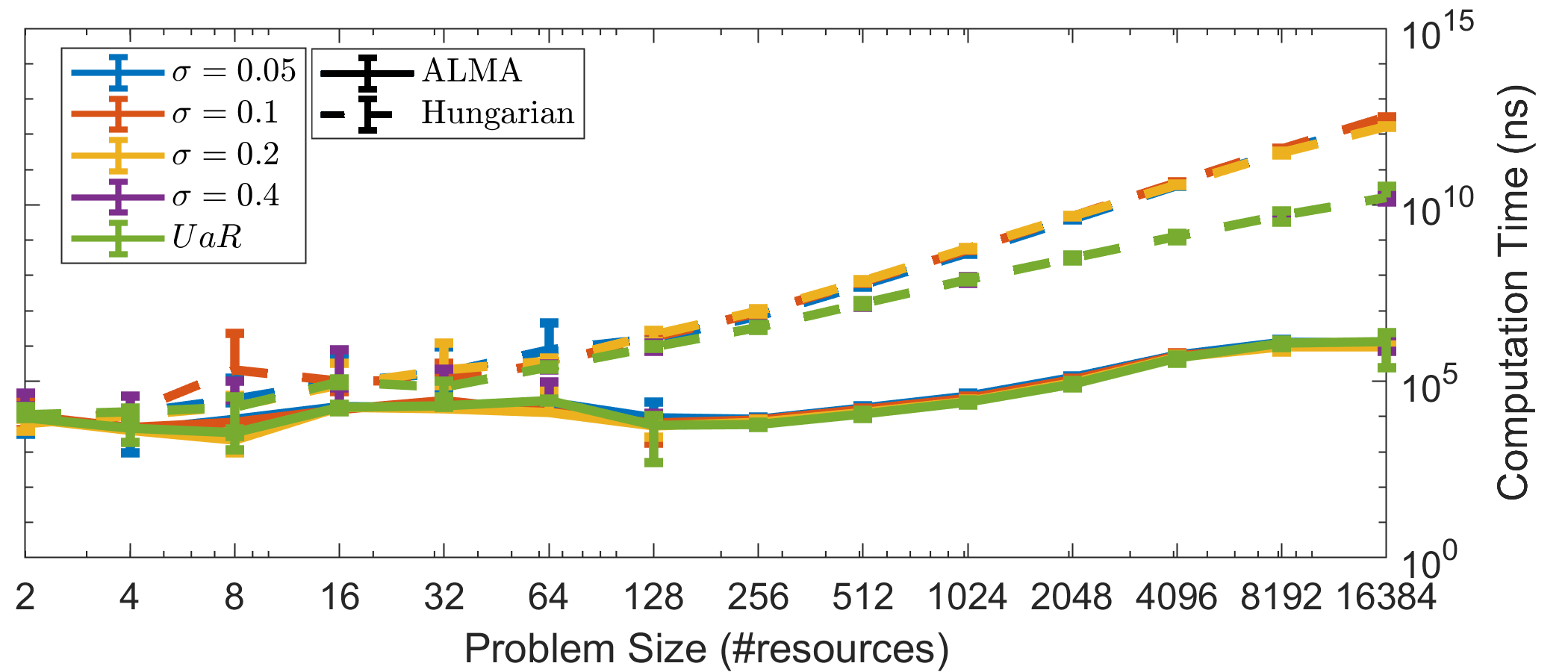}
		\caption{}
		\label{fig: testcase1_gaussian_uar_computationTime}
	\end{subfigure}%
	~ 
	\begin{subfigure}[t]{0.5\textwidth}
		\centering
		\includegraphics[width = 1 \linewidth]{./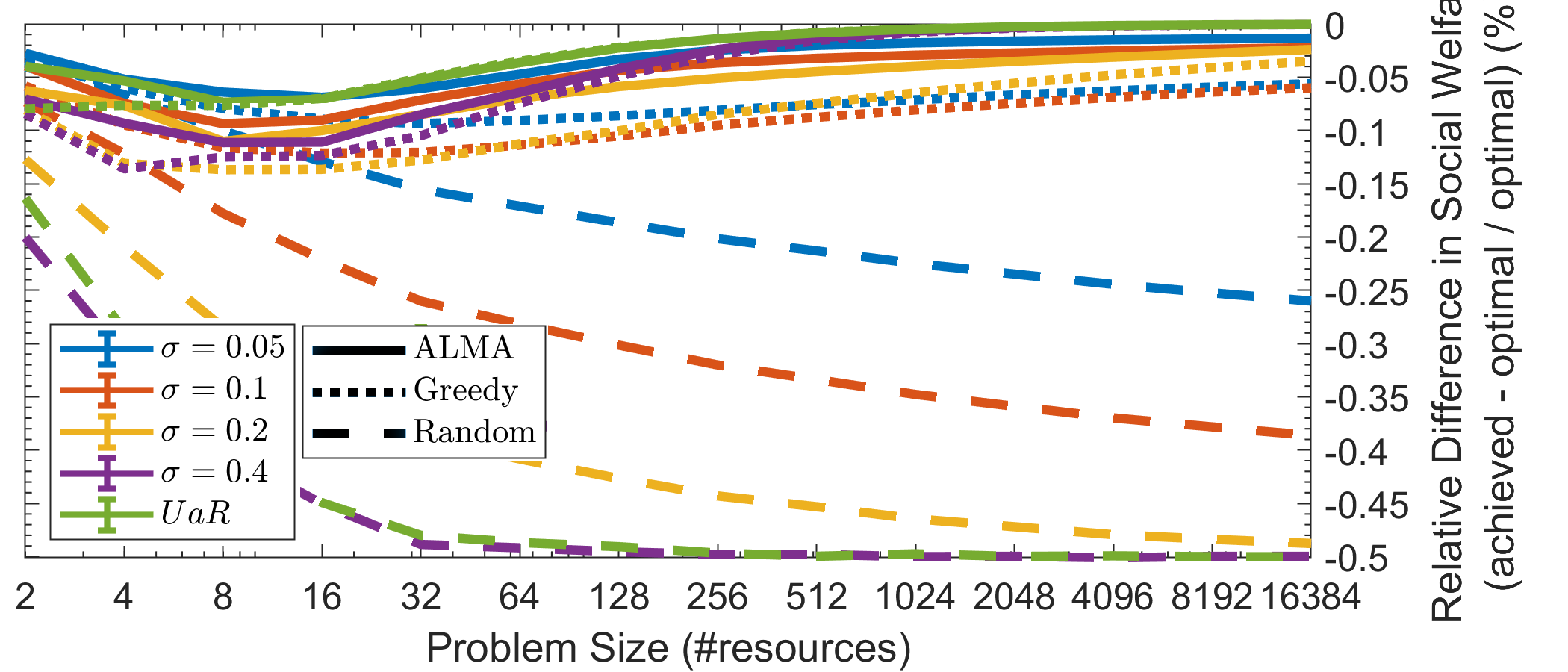}
		\caption{}
		\label{fig: testcase1_gaussian_uar_cumulativeRegret}
	\end{subfigure}
	\caption{From left to right, top to bottom: (\ref{fig: testcase1_gaussian_uar_steps}) Total convergence time (\#steps), (\ref{fig: testcase1_gaussian_uar_stepsPerAgent}) average time (\#steps) for an individual agent to successfully acquire a resource, (\ref{fig: testcase1_gaussian_uar_computationTime}) Computation time (ns), and (\ref{fig: testcase1_gaussian_uar_cumulativeRegret}) Relative difference in social welfare (\%), for increasing number of resources, and $N = R$. Fig. \ref{fig: testcase1_gaussian_uar_steps}, \ref{fig: testcase1_gaussian_uar_computationTime}, and \ref{fig: testcase1_gaussian_uar_stepsPerAgent} are in double log. scale, while Fig. \ref{fig: testcase1_gaussian_uar_cumulativeRegret} is in single log. scale.}
	\label{fig: testcase 1}
\end{figure*}

\section{Evaluation} \label{Evaluation}

In this section we evaluate ALMA under various test cases. For the first two, we focus on convergence time and relative difference in social welfare (SW), i.e., $(achieved - optimal) / optimal$. In every reported metric, except for the social welfare, we report the average value out of 128 runs of the same problem instance. Error bars represent one standard deviation (SD) of uncertainty. As a measure of social welfare, we report the cumulative regret of the aforementioned 128 runs, i.e., for $i \in [1, 128]$ runs, the reported relative difference in social welfare is $(\sum_i achieved - \sum_i optimal) / \sum_i optimal$. This was done to improve visualization of the results in smaller problem sizes, where really small differences result in high SD bars (e.g., if $achieved$ = $1\times10^{-5}$, and $optimal$ = $2\times10^{-5}$, the relative difference would be $-50\%$ for practically the same matching). The optimal matchings were computed using the Hungarian algorithm \footnote{We used Kevin L. Stern's $\mathcal{O}(N^3)$ implementation: \url{https://github.com/KevinStern/}.}. The third test case is an on-line setting, thus we report the achieved SW (not the relative difference to the optimal), and the empirical competitive ratio (average out of 128 runs, as before). All the simulations were run on 2x Intel Xeon E5-2680 with 256 GB RAM. In Section \ref{testcase 1} we use the logistic function (Eq. \ref{Eq: logistic}) with $\gamma = 2$, while in Sections \ref{testcase 2} \& \ref{testcase 3} we use the linear function (Eq. \ref{Eq: linear}) with $\epsilon = 0.1$.

It is important to stress that our goal is not to improve the convergence speed of a centralized, or decentralized algorithm. Rather, the computation time comparisons of Sections \ref{testcase 1} \& \ref{testcase 2} are meant to ground the actual speed of ALMA, and argue in favor of its applicability on large-scale, real-world scenarios. Given the nature of the problem, we elected to use a specialized algorithm to compute the optimal solution, rather than a general LP-based technique (e.g., the Simplex method). Specifically, we opted to use the Hungarian algorithm which, first, has proven polynomial worse case bound, and second, as our simulations will demonstrate, can handle sufficiently large problems.

\subsection{Test Case \#1: Uniform, and Noisy Common Preferences} \label{testcase 1}

\subsubsection{Setting}

As a first evaluation test case, we cover the extreme scenarios. The first pertains to an anti-coordination scenario in which agents with similar preferences compete over the same set of actions \cite{SSS1817485}. For example, autonomous vehicles would prefer the least congested route, bidding agents participating in multiple auctions would prefer the ones with the smallest number of participants, etc. We call this scenario `noisy common preferences' and model the utilities as follows: $\forall n, n' \in \mathcal{N}, |u_n(r) - u_{n'}(r)| \leq \text{noise}$, where the noise is sampled from a zero-mean Gaussian distribution, i.e., $\text{noise} \sim \mathcal{N}(0, \sigma^2)$ \footnote{Similar results achieved using uniform noise, i.e., $\sim \mathcal{U}(-\nu, \nu)$.}. In the second scenario the utilities are initialized uniformly at random ($UaR$) for each agent and resource.

\subsubsection{Convergence Time}

Starting with Fig. \ref{fig: testcase1_gaussian_uar_steps}, we can see that the convergence time for the system of agents following Alg. \ref{algo: learning rule} is linear to the number of resources $R$. From the perspective of a single agent, Fig. \ref{fig: testcase1_gaussian_uar_stepsPerAgent} shows that on average he will successfully acquire a resource significantly ($> 2\times$) faster than the total convergence time. This suggest that there is a small number of agents which take longer in claiming a resource and which in turn delay the system's convergence. We will exploit this property in the next section to present the anytime property of ALMA. Fig. \ref{fig: testcase1_gaussian_uar_computationTime} shows that ALMA requires approximately 4 to 6 orders of magnitude less computation time than the centralized Hungarian algorithm. Furthermore, ALMA seems to scale more gracefully, an important property for real world applications. Note also that in real-world applications we would have to take into account communication time, communication reliability protocols, etc., which create additional overhead for the Hungarian or any other algorithm for the assignment problem.

\subsubsection{Efficiency}

The relative difference in social welfare (Fig. \ref{fig: testcase1_gaussian_uar_cumulativeRegret}) reaches asymptotically zero as $R$ increases. For a small number of resources, ALMA achieves the worst social welfare, approximately $11\%$ worse than the optimal. Intuitively this is because when we have a small number of choices, a single wrong matching can have a significant impact to the final social welfare, while as the number of resources grow, the impact of an erroneous matching is mitigated. For 16384 resources we lose less than $2.5\%$ of the optimal. As a reference, Fig. \ref{fig: testcase1_gaussian_uar_cumulativeRegret} depicts the centralized greedy, and the random solutions as well. The greedy solution goes through the participating agents randomly, and assigns them their most preferred unassigned resource. In this scenario, the random solution loses up to $50\%$ of the optimal SW, while the greedy solution achieves similar results to ALMA, especially in high noise situations. This is to be expected, since first, all agents are interested in all the resources, and second, as the noise increases, the agents' preferences become more distinguishable, more diverse. ALMA is of a greedy nature as well, albeit it utilizes a more intelligent backing-off scheme. Contrary to that, the greedy solution does not take into account the utilities between agents, thus there are scenarios where ALMA would significantly outperform the greedy (e.g., see Section \ref{testcase 3}). Finally, recall that ALMA operates in a significantly harder domain with no communication, limited feedback, and time constraints. In contrast, the greedy method requires either a central coordinator or message exchange (to communicate users' preferences and resolve collisions).

\begin{figure*}[p!]
	\centering
	\begin{subfigure}[t]{0.5\textwidth}
		\centering
		\includegraphics[width = 1 \linewidth]{./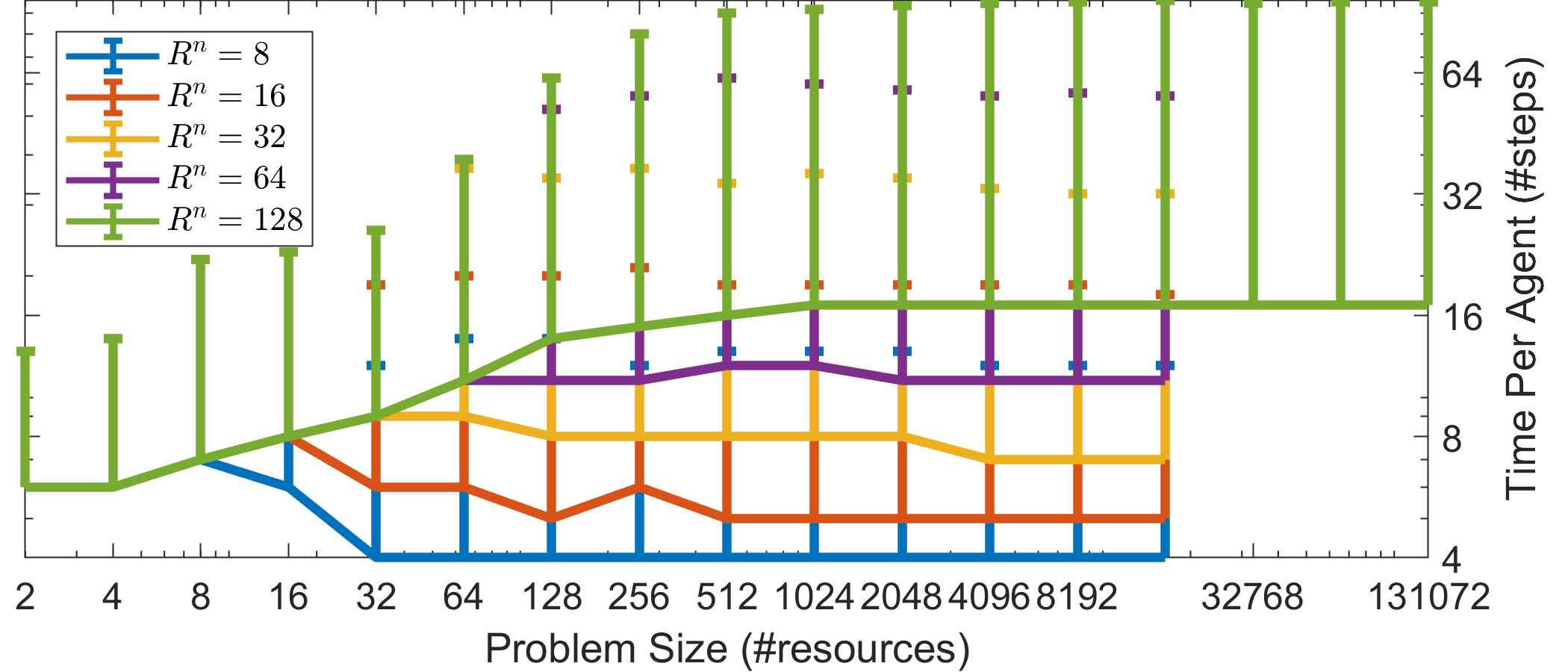}
		\caption{}
		\label{fig: testcase2_mapCutoffBounded_gridSpots_4_mapCutoffDistance_99999999_stepsPerAgent}
	\end{subfigure}%
	~ 
	\begin{subfigure}[t]{0.5\textwidth}
		\centering
		\includegraphics[width = 1 \linewidth]{./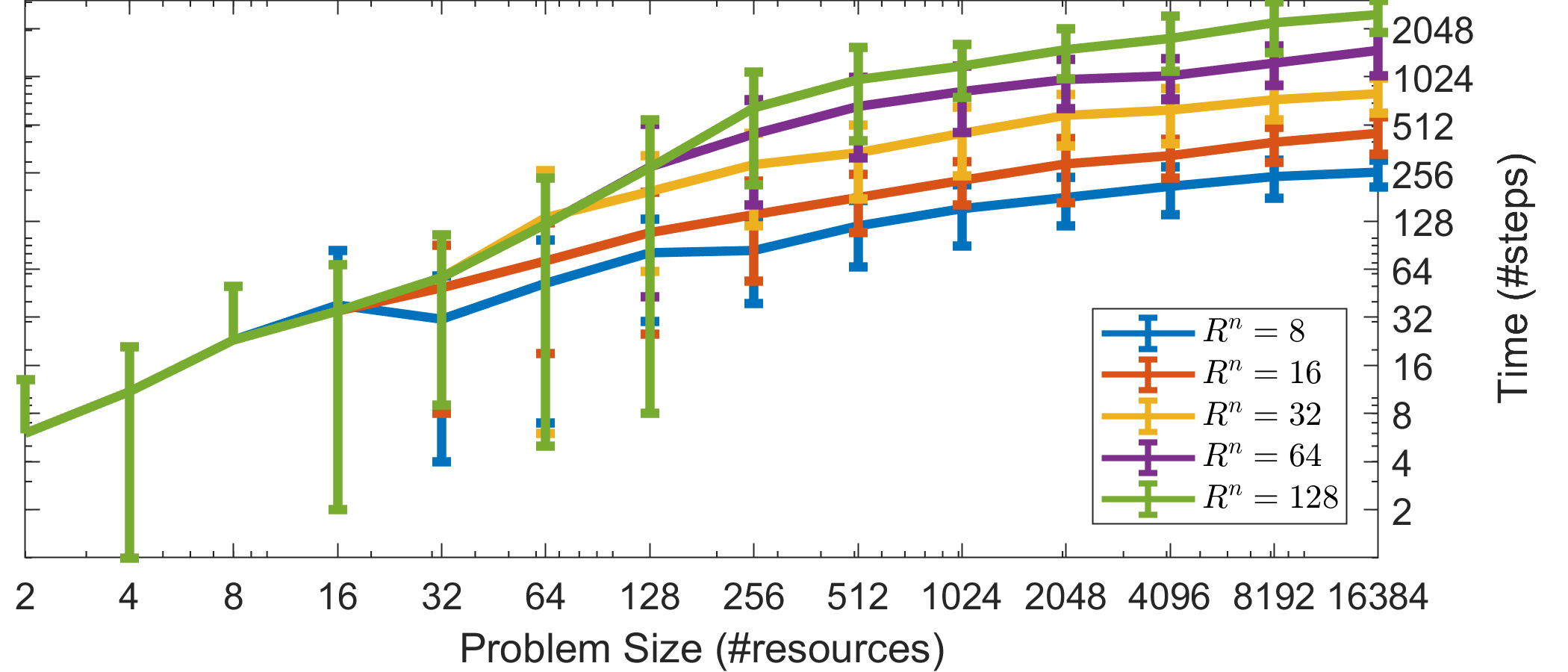}
		\caption{}
		\label{fig: testcase2_mapCutoffBounded_gridSpots_4_mapCutoffDistance_99999999_steps}
	\end{subfigure}
	~ \\
	\begin{subfigure}[t]{0.5\textwidth}
		\centering
		\includegraphics[width = 1 \linewidth]{./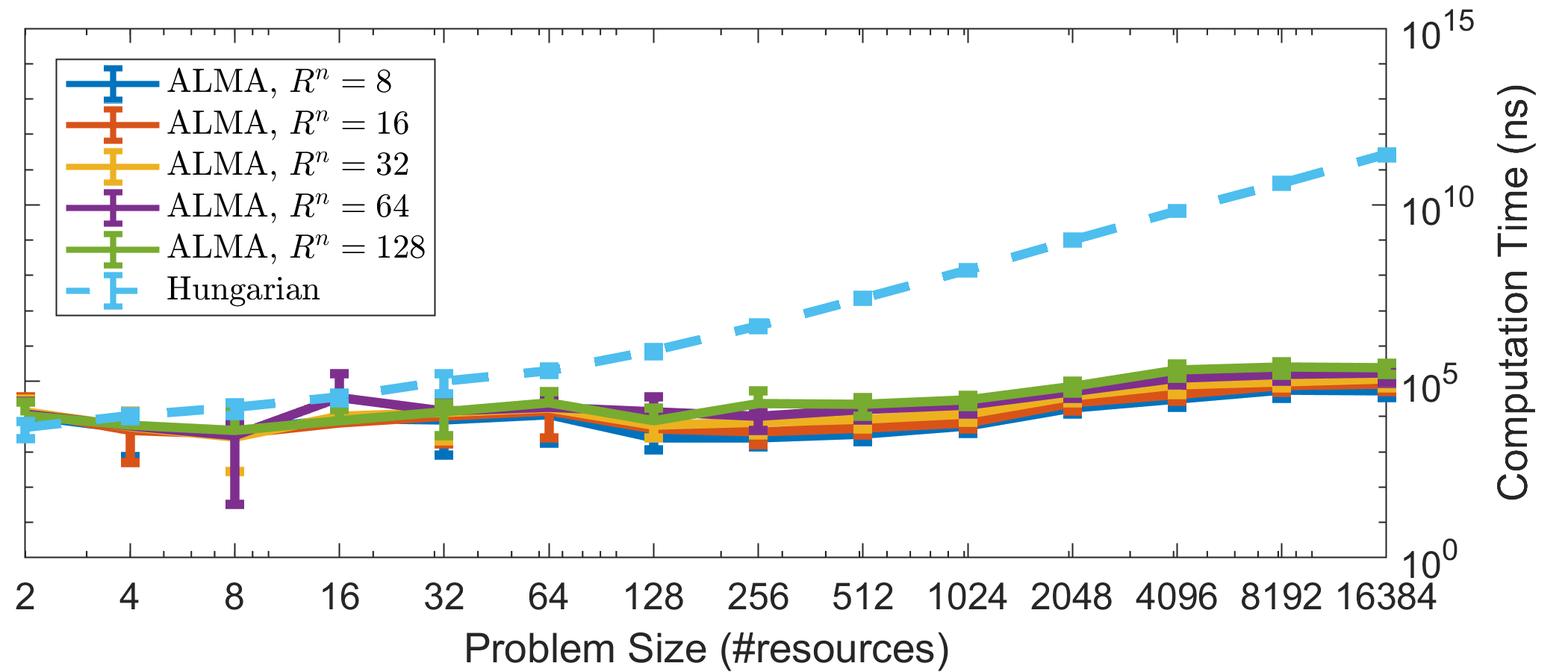}
		\caption{}
		\label{fig: testcase2_mapCutoffBounded_gridSpots_4_mapCutoffDistance_99999999_computationTime}
	\end{subfigure}%
	~ 
	\begin{subfigure}[t]{0.5\textwidth}
		\centering
		\includegraphics[width = 1 \linewidth]{./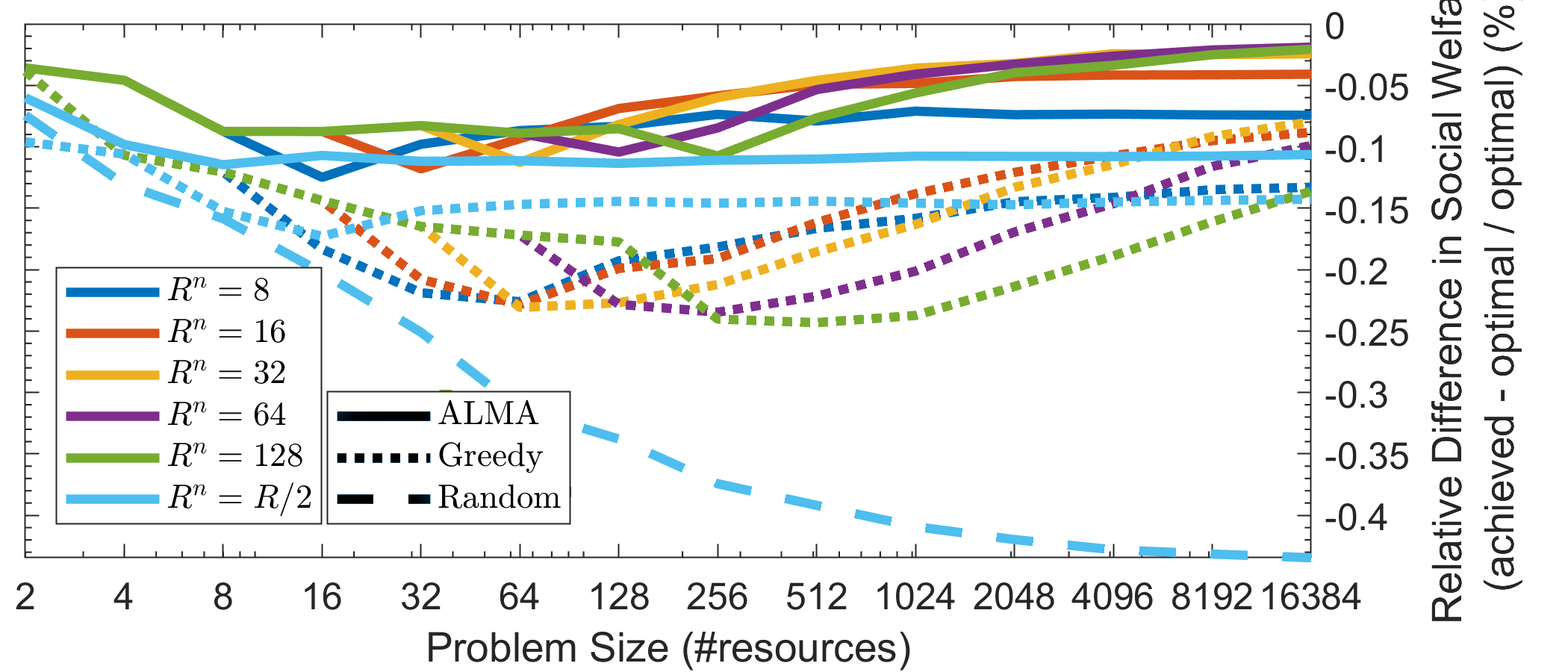}
		\caption{}
		\label{fig: testcase2_mapCutoffBounded_gridSpots_4_mapCutoffDistance_99999999_cumulativeRegret}
	\end{subfigure}
	~ \\
	\begin{subfigure}[t]{0.5\textwidth}
		\centering
		\includegraphics[width = 1 \linewidth]{./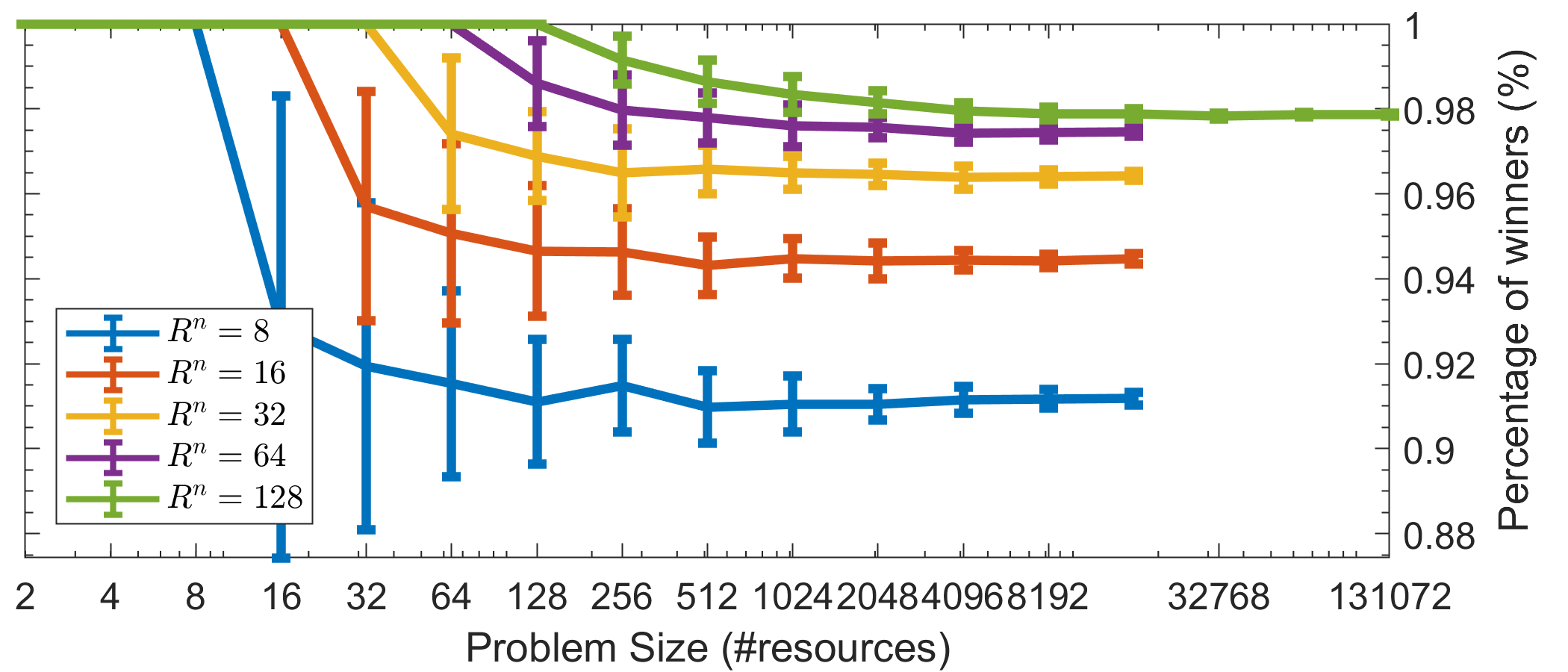}
		\caption{}
		\label{fig: testcase2_mapCutoffBounded_gridSpots_4_mapCutoffDistance_99999999_percentageOfWinners}
	\end{subfigure}%
	~
	\begin{subfigure}[t]{0.5\textwidth}
		\centering
		\includegraphics[width = 1 \linewidth]{./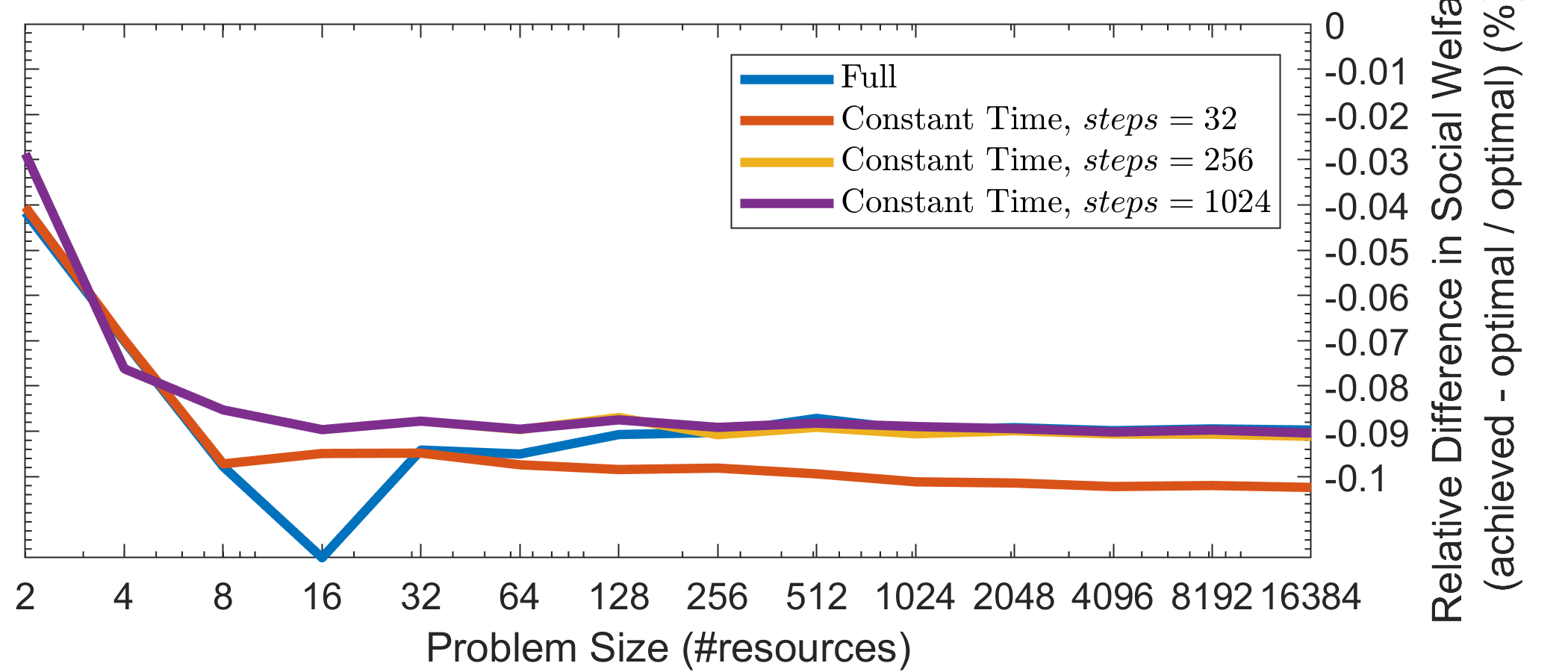}
		\caption{}
		\label{fig: testcase2_gridSpots_4_distance_025_cumulativeRegret}
	\end{subfigure}
	~ \\
	\begin{subfigure}[t]{0.5\textwidth}
		\centering
		\includegraphics[width = 1 \linewidth]{./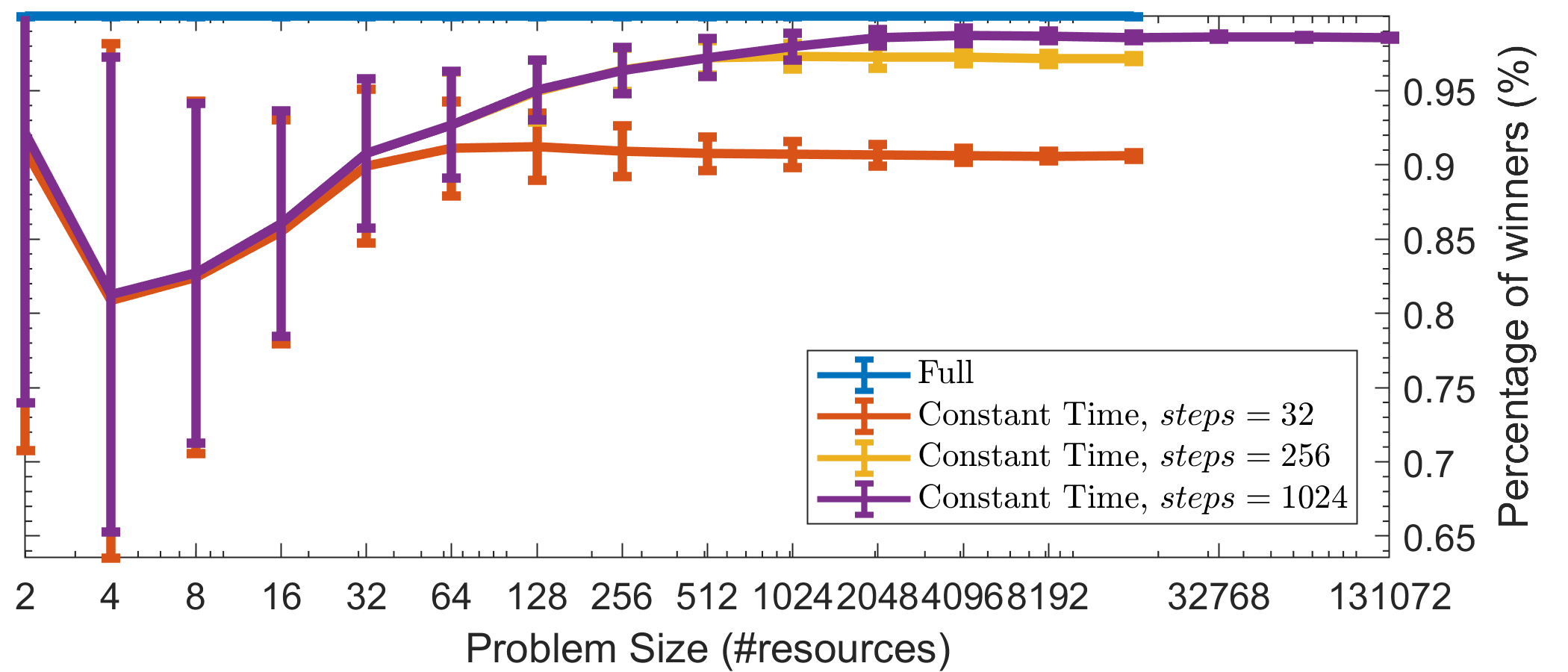}
		\caption{}
		\label{fig: testcase2_gridSpots_4_distance_025_percentageOfWinners}
	\end{subfigure}%
	~
	\begin{subfigure}[t]{0.5\textwidth}
		\centering
		\includegraphics[width = 1 \linewidth]{./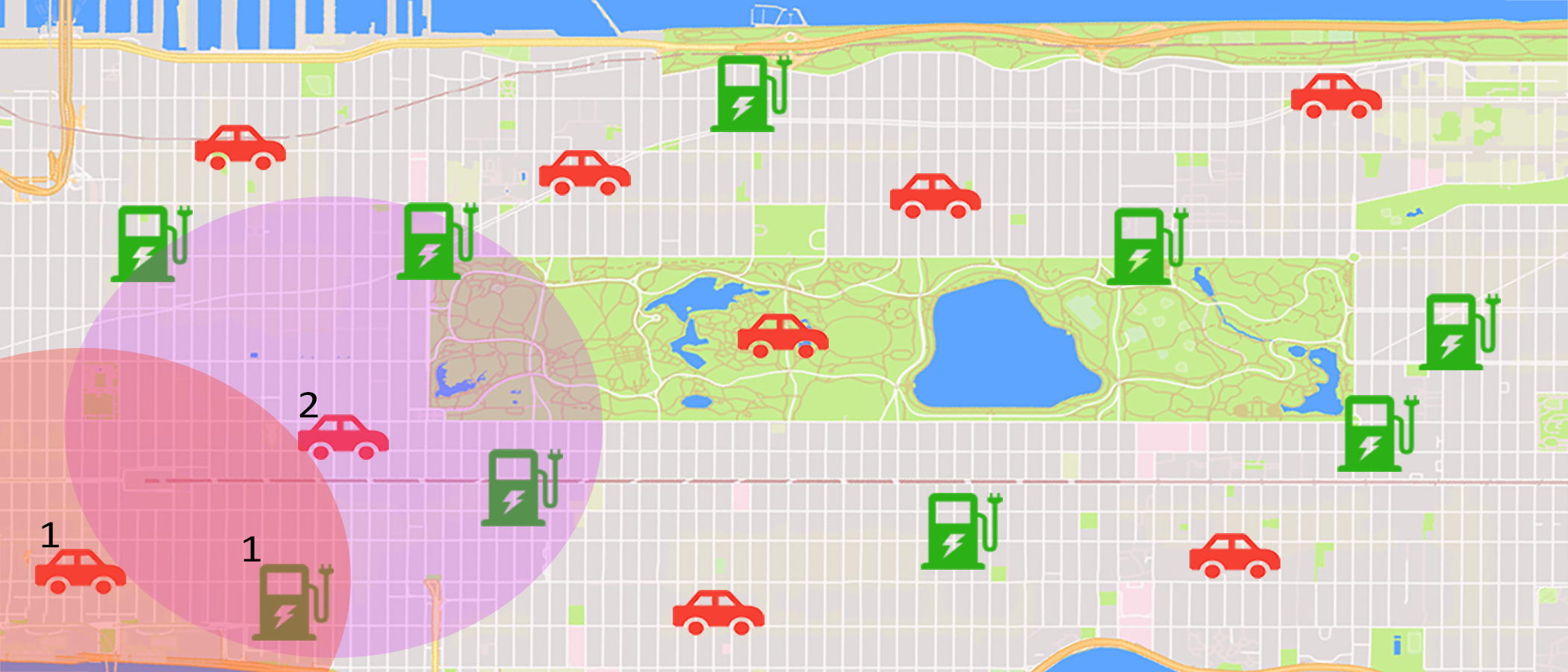}
		\caption{}
		\label{fig: map}
	\end{subfigure}
	\caption{Left to right, top to bottom: (\ref{fig: testcase2_mapCutoffBounded_gridSpots_4_mapCutoffDistance_99999999_stepsPerAgent}) Average time (\#steps) for an agent to acquire a resource, (\ref{fig: testcase2_mapCutoffBounded_gridSpots_4_mapCutoffDistance_99999999_steps}) Total convergence time (\#steps), (\ref{fig: testcase2_mapCutoffBounded_gridSpots_4_mapCutoffDistance_99999999_computationTime}) Computation time (ns), (\ref{fig: testcase2_mapCutoffBounded_gridSpots_4_mapCutoffDistance_99999999_cumulativeRegret}) Relative difference in SW (\%), (\ref{fig: testcase2_mapCutoffBounded_gridSpots_4_mapCutoffDistance_99999999_percentageOfWinners}) Percentage of `winners', (\ref{fig: testcase2_gridSpots_4_distance_025_cumulativeRegret}) Relative difference in SW (\%) in interrupted execution, (\ref{fig: testcase2_gridSpots_4_distance_025_percentageOfWinners}) Percentage of `winners' in interrupted execution. The aforementioned are for increasing number of resources, and $N = R$. Fig. (\ref{fig: testcase2_mapCutoffBounded_gridSpots_4_mapCutoffDistance_99999999_stepsPerAgent}), (\ref{fig: testcase2_mapCutoffBounded_gridSpots_4_mapCutoffDistance_99999999_steps}), and (\ref{fig: testcase2_mapCutoffBounded_gridSpots_4_mapCutoffDistance_99999999_computationTime}) are in double log scale, while the rest are in single log scale. Fig. (\ref{fig: map}) presents an example of the studied resource allocation scenario in an urban environment. We assume grid length of $\sqrt{4 \times N}$.}
	\label{fig: testcase 2}
\end{figure*}

\subsection{Test Case \#2: Resource Allocation in a Cartesian Map with Manhattan Distances} \label{testcase 2}

Adopting a simple rule allows the applicability of ALMA to large scale multi-agent systems. In this section we will analyze such a scenario. Specifically we are interested in resource allocation in urban environments (e.g., parking spots / charging stations for autonomous vehicles, taxi - passenger matchings, etc.). The aforementioned problems become ever more relevant due to rapid urbanization, and the natural lack of coordination in the usage of resources \cite{varakantham2016sequential}. The latter result in the degradation of response (e.g., waiting time) and quality metrics in large cities \cite{varakantham2016sequential}.

\subsubsection{Setting}

Let us consider a Cartesian map representing a city on which are randomly distributed vehicles and charging stations, as depicted in Fig. \ref{fig: map}. The utility received by a vehicle $n$ for using a charging station $r$ is proportional to the inverse of their distance, i.e., $u_n(r) = 1 / d_{n,r}$. Since we are in an urban environment, let $d_{n,r}$ denote the Manhattan distance. Typically, there is a cost each agent is willing to pay to drive to a resource, thus there is a cut-off distance, upon which the utility of acquiring the resource is zero (or possibly negative). This is a typical scenario encountered in resource allocation in urban environments, where there are spatial constraints and local interactions.

The way such problems are typically tackled, is by dividing the map to sub-regions, and solving each individual sub-problem. For example, Singapore is divided into 83 zones based on postal codes \cite{6040748}, and taxi drivers' policies are optimized according to those \cite{nguyen2017collective,varakantham2012decision}. On the other hand, not placing bounds means that the current solutions will not scale. To the best of our knowledge, we are the first to propose an anytime heuristic for resource allocation in urban environments that can scale in constant time without the need to artificially split the problem. Instead, ALMA exploits the two typical characteristics of an urban environment: the anonymity in interactions and homogeneity in supply and demand \cite{varakantham2016sequential} (e.g., assigning any of two equidistant charging stations to a vehicle, would typically result to the same utility). This results in a simple learning rule which, as we will demonstrate in this section, can scale to hundreds of thousands of agents.

\subsubsection{Convergence Time}

To demonstrate the latter, we placed a bound on the maximum number of resources each agent is interested in, and on the maximum number of agents competing for a resource, specifically $R^n = N^r \in \{8, 16, 32, 64, 128\}$. According to Corollary \ref{Th: convergence agent}, bounding these two quantities should result in convergence in constant time, regardless of the total problem size (R, N). The latter is corroborated by Fig. \ref{fig: testcase2_mapCutoffBounded_gridSpots_4_mapCutoffDistance_99999999_stepsPerAgent}, which shows that the average number of time-steps until an agent successfully claims a resource remains constant as we increase the total problem size. Same is true for the system's convergence time (Fig. \ref{fig: testcase2_mapCutoffBounded_gridSpots_4_mapCutoffDistance_99999999_steps}), which caps as $R$ increases. The small increase is due to outliers, as Fig. \ref{fig: testcase2_mapCutoffBounded_gridSpots_4_mapCutoffDistance_99999999_steps} reports the convergence time of the last agent. This results to approximately $7$ orders of magnitude less computation time than the centralized Hungarian algorithm (Fig. \ref{fig: testcase2_mapCutoffBounded_gridSpots_4_mapCutoffDistance_99999999_computationTime}), and this number would grow boundlessly as we increase the total problem size. Moreover, as mentioned, in an actual implementation, any algorithm for the assignment problem would face additional overhead due to communication time, reliability protocols, etc.

\subsubsection{Efficiency}

Along with the constant convergence time, ALMA is able to reach high quality matchings, achieving less than $7.5\%$ worse social welfare (SW) than the optimal (Fig. \ref{fig: testcase2_mapCutoffBounded_gridSpots_4_mapCutoffDistance_99999999_cumulativeRegret}). The latter refers to the small bound of $R^n = 8$. As observed in Section \ref{testcase 1}, with a small number of choices, a single wrong matching can have a significant impact to the final social welfare. By increasing the bound to a more realistic number (e.g., $R^n = 32$), we achieve less than $2.5\%$ worse SW. In general, for $R > 2$ resources and different values of $R^n$, ALMA achieves between $1.9 - 12 \%$ loss in SW, while the greedy approach achieves $8.0 - 24 \%$ and the random $7.3 - 43.4 \%$. The behavior of the graphs depicted in Fig. \ref{fig: testcase2_mapCutoffBounded_gridSpots_4_mapCutoffDistance_99999999_cumulativeRegret} for $R^n \in \{8, 16, 32, 64, 128\}$ indicate that, as the problem size ($R$) increases, the social welfare reaches its lowest value at $R = 2 \times R^n$. To investigate the latter, we have included a graph for increasing $R^n$ (instead of constant to the problem size), specifically $R^n = R / 2$. ALMA achieves a constant loss in social welfare (approximately $11\%$). The greedy approach achieves loss of $14\%$, while the random solution degrades towards $44\%$ loss.

Compared to Test Case \#1, this is a significantly harder problem for a decentralized algorithm with no communication and no global knowledge of the resources. The set of resources each agent is interested in is a proper subset of the set of the total resources, i.e., $\mathcal{R}^n \subsetneq \mathcal{R}$ (or could be $R < N$). Furthermore, the lack of communication between the participants, and the stochastic nature of the algorithm can lead to deadlocks, e.g., in Fig. \ref{fig: map}, if vehicle 2 acquires resource 1, then vehicle 1 does not have an available resource in range. Nonetheless, ALMA results in an almost complete matching. Fig. \ref{fig: testcase2_mapCutoffBounded_gridSpots_4_mapCutoffDistance_99999999_percentageOfWinners}, depicts the percentage of `winners' (i.e., agents that have successfully claimed a resource $r$ such that $u_n(r) > 0$). The aforementioned percentage refers to the total population ($N$) and not the maximum possible matchings (potentially $< N$). As depicted, the percentage of `winners' is more than $90\%$, reaching up to $97.8\%$ for $R^n = 128$. We also employed ALMA in larger simulations with up to $131072$ agents, and equal resources. As seen in Fig. \ref{fig: testcase2_mapCutoffBounded_gridSpots_4_mapCutoffDistance_99999999_percentageOfWinners}, the percentage of winners remains stable at around $98\%$. Even though the size of the problem prohibited us from running the Hungarian algorithm (or an out-of-the-box LP solver) and validating the quality of the achieved matching, the fact that the percentage of winners remains the same suggests that the relative difference in SW will continue on the same trend as in Fig. \ref{fig: testcase2_mapCutoffBounded_gridSpots_4_mapCutoffDistance_99999999_cumulativeRegret}. Moreover, the average steps per agent to claim a resource remains, as proven, constant (Fig. \ref{fig: testcase2_mapCutoffBounded_gridSpots_4_mapCutoffDistance_99999999_stepsPerAgent}). The latter validate the applicability of ALMA in large scale applications with hundreds of thousands of agents. 

\subsubsection{Anytime Property}

In the real world, agents are required to run in real time, which imposes time constraints. ALMA can be used as an anytime heuristic as well. To demonstrate the latter, we compare four configurations: the `full' one, which is allowed to run until the systems converges, and three `constant time' versions which are given a time budget of 32, 256, and 1024 time-steps. In this scenario, we do not impose a bound on $R^n, N^r$, but we assume a cut-off distance, upon which the utility is zero. The cut-off distance was set to $0.25$ of the maximum possible distance, i.e., as the problem size grows, so do the $R^n, N^r$. On par with Test Case \#1, the full version converges in linear time. As depicted in Fig. \ref{fig: testcase2_gridSpots_4_distance_025_cumulativeRegret}, the achieved SW is less than $9\%$ worse than the optimal. The inferior results in terms of SW compared to Test Case \#1 are because this is a significantly harder problem due to the aforementioned deadlocks. On the other hand though, ALMA benefits from the spatial constraints of the problem. The average number of time-steps an individual agent needs to successfully claim a resource is significantly smaller, which suggest that we can enforce computation time constraints. Restricting to only 32, 256, and 1024 time-steps, results in $1.25\%$, $0.12\%$, and $0.03\%$ worse SW than the unrestricted version, respectively. Even in larger simulations with up to $131072$ agents, the percentage of winners (Fig. \ref{fig: testcase2_gridSpots_4_distance_025_percentageOfWinners}) remains stable at $98.6\%$, which suggests that the relative difference in SW will continue on the same trend as in Fig. \ref{fig: testcase2_gridSpots_4_distance_025_cumulativeRegret} (we do not suggest that this is the case in any domain. For example, in the noisy common preferences domain of Test Case \#1, the quality of the achieved matching decreases boundlessly as we decrease the alloted time. Nevertheless, the aforedescribed domain is a realistic one, with a variety of real-world applications). Finally, the repeated nature of such problems suggests that even in the case of a deadlock, the agent which failed to win a resource, will do so in some subsequent round.

\subsection{Test Case \#3: On-line Taxi Request Match} \label{testcase 3}

In this section we present a motivating test case involving ride-sharing, via on-line taxi request matching, using \emph{real} data of taxi rides in New York City. Ride-sharing (or carpooling), offers great potential in congestion relief and environmental sustainability in urban environments. In the past few years, several commercially successful ride-sharing companies have surfaced (e.g., Uber, Lyft, etc.), giving rise to a new incarnation of ride-sharing: dynamic ride-sharing, where passengers are matched in real-time. Ride-sharing, though, results to some passenger disruption due to loss in flexibility, security concerns, etc. Compensation comes in the form of monetary incentives, as it allows passengers to share the travel expenses, and thus reduce the cost. Ride-sharing companies account for a plethora of factors, like current demand, prime time pricing, the cost of the route without any ride-sharing, the likelihood of a match, etc. Yet, a fundamental factor of the cost of any shared ride, no matter if it is a company or a locally-organized car sharing scheme, is the traveled distance.

In this test case, we attempt to maximize the total distanced saved, by matching taxi requests of high overlap. Fig. \ref{fig: testcase3_taxis_map} provides an illustrative example. There are two passengers (depicted as yellow and red) with high overlap routes. Each can drive on their own to their respective destinations (dashed yellow and red line respectively), or share a ride (green line) and reduce travel costs.

Dynamic ride-sharing is an inherently \emph{on-line} setting, as a matching algorithm is unaware of the requests that will appear in the future and needs to make decisions for the requests before they `expire' (a similar setting was studied in \cite{ashlagi2018maximum}). ALMA is highly befitting for such a scenario, as it involves large-scale matchings under dynamic demand, it is highly decentralized, and partially observable.

\subsubsection{Setting}

We use a dataset \footnote{\url{https://www.kaggle.com/debanjanpaul/new-york-city-taxi-trip-distance-matrix/}} of all taxi requests $(\rho)$ in New York City during one week (\{01-01-16 0:00 - 01-07-16 23:59\}, 34077 requests in total). The data include pickup and drop-off times, and geolocations. Requests appear (become open) at their respective pickup time, and wait $k_\rho$ time-steps to find a match. Let a time-step be one minute. After $k_\rho$ time-steps we call request $\rho$, critical. If a critical request is not matched, we assume they drive off to their destination in a single passenger ride. Let $open, critical$ denote the sets of open, and critical requests respectively, and let $current = open \cup critical$. To compute $k_\rho$ we assume the following: There is a minimum $minW$, and a maximum $maxW$ waiting time set by the ride-sharing company, i.e., $minW \leq k_\rho \leq maxW, \forall \rho$. Moreover, since each passenger specifies his destination in advance, we can compute the trip time ($l_\rho$). Assuming people are willing to wait for a time that is proportional to their trip time, let $k_\rho = q \times l_\rho$, where $q \in [0, 1]$. The parameters $minW, maxW$, and $q$ can be set by the ride-sharing company. We report results on different values for all of the above parameters. For each pair $\rho_1, \rho_2$ of requests, we compute the driving distance ($d_{\rho_1, \rho_2} = \min$ of all possible combinations of driving between $\rho_1, \rho_2$'s pickup and drop-off locations) that would be traveled if $\rho_1, \rho_2$ are matched, i.e., if they share the same taxi. Subsequently, the utility of matching $\rho_1$ to $\rho_2$ (distance saved) is $u_{\rho_1}(\rho_2) = d_{\rho_1, \rho_2}$ (km). 

Given the on-line nature of the setting, it might be beneficial to use the following \emph{non-myopic heuristic}: avoid matching low utility pairs, as long as the requests are not critical, since more valuable pairs might be presented in the future. Thus, if $u_{\rho_1}(\rho_2) < d_{min}$, and $\rho_1, \rho_2 \notin critical$, we do not match $\rho_1$, $\rho_2$. In what follows, we select for each algorithm and for each simulation the value $d_{min} \in \{0, 500, 1000, 1500, 2000, 2500\}$ that results in the highest score. To compute the actual trip time, and driving distance, we have used the Open Source Routing Machine \footnote{\url{http://project-osrm.org/}}, which computes shortest paths in road networks.

\begin{figure}[t!]
	\centering
	\includegraphics[width = 1 \linewidth]{./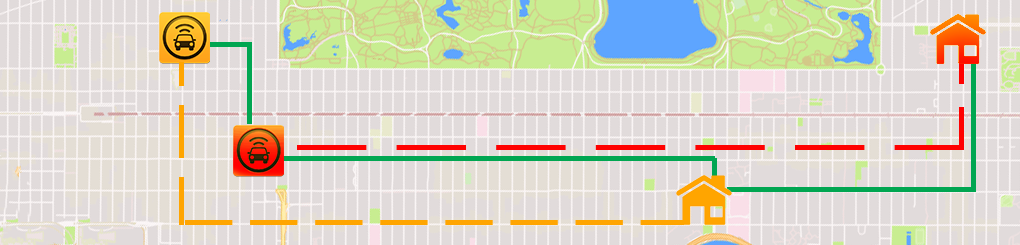}
	\caption{Example of the studied taxi request matching scenario.}
	\label{fig: testcase3_taxis_map}
\end{figure}

\subsubsection{Computation of the optimal matching}

In this scenario, each request has a dual role, being both an agent and a resource, i.e., we have a non-bipartite graph. For computing the optimal (maximum weight) matching for the employed on-line heuristics, we use the blossom algorithm of \cite{edmonds1965maximum}, which computes a maximum weight matching in general graphs. This enables us to compute the best possible matching among current requests, i.e., requests that have not `expired' at the time of the computation. In fact, the following observation allows us to compute the optimal \emph{off-line} matching as well, i.e., the best possible matching over the whole time interval. Let $a_\rho$ be the pick-up time of request $\rho$ and let $e_\rho = a_\rho + k_\rho$ be the time when it becomes critical. Then, we can redefine the utility of a matching as:

\begin{equation} \label{eq: testcase3_optimaloffline}
	u_{\rho_1}(\rho_2) =
	\begin{cases}
		d_{\rho_1, \rho_2}, &\text{if } a_i \leq e_j \text{ and } e_i \leq e_j\\
		-1,  &\text{otherwise}
	\end{cases}
\end{equation}

\noindent
i.e., the utility is the distance saved if both requests are simultaneously \emph{active} when matched and $-1$ otherwise. The latter effectively results in this pair never being matched in an optimal solution. We remark that this clairvoyant matching is not feasible in the on-line setting and serves as a benchmark against which we can compare the performance of on-line algorithms, as is common in the literature of competitive analysis \cite{borodin2005online}. The measure of efficiency, as compared to the off-line optimal, will be the \emph{empirical competitive ratio}, i.e., the ratio of the social welfare of the on-line algorithm over the welfare of the optimal, as measured empirically for our dataset.

\subsubsection{The Blossom Algorithm vs Linear Programming} 

Contrary to the other test cases, the fact that the graph is not bipartite has implications on how linear programming can be used to compute the optimal solution. In particular, in all of the presented test cases (\#1, \#2, and \#3), one can formulate the problem of computing the optimal solution as an Integer Linear Program (ILP) and then solve it using some general solver like CPLEX \footnote{https://www.ibm.com/analytics/cplex-optimizer}. Yet, the computational complexity of solving the aforementioned ILP varies amongst the different test cases.

Solving integer linear programs is generally quite computationally demanding. Thus, it is common to resort to solving the \emph{LP relaxation} (where the integrality constraints have been `relaxed' to fractional constraints), which can be computed in polynomial time.

In the case of bipartite graphs (such as test cases \#1 and \#2), the standard LP relaxation admits integer solutions, i.e., solutions to the actual maximum weight bipartite matching problem (because the constraint matrix is totally unimodular). In the case of general (non-bipartite) graphs, this is no longer the case, thus we have to resort to solving the \emph{LP relaxation}. It is known that the (fractional) optimal solution to the LP relaxation might be better than the (integral) optimal solution to the original ILP formulation (i.e., it has an \emph{integrality gap} which is larger than $1$, in fact $2$). In other words, solving the LP relaxation will not provide solutions for the maximum weight matching problem but for an `easier version' with fractional matchings. Thus, comparing against that solution could only give very pessimistic ratios, when the real empirical competitive ratios are much better. 

One can derive a different ILP formulation of the problem using more involved constraints (called `blossom' constraints \cite{edmonds1965maximum}), whose relaxation admits integer solutions (i.e., the integrality gap is now $1$). However, the latter would result in an exponential number of constraints, and one would need to employ the Ellipsoid method with an appropriately chosen separation oracle to solve it in polynomial time (see \cite[page 4]{feige2002approximating} for more details]. Overall, the employment of the combinatorial algorithm of \cite{edmonds1965maximum} for finding the maximum weight matching is a cleaner and more efficient solution.

\subsubsection{Benchmarks}

Each request runs ALMA independently. ALMA waits until the request becomes critical, and then matches it by running Alg. \ref{algo: learning rule}, where $\mathcal{N} = critical$, and $\mathcal{R} = current$. In this non-bipartite scenario, if an agent is matched under his dual role as a resource, he is immediately removed. As we explained earlier, it is infeasible for an on-line algorithm to compute the optimal matching over the whole period of time. Instead, we consider \emph{just-in-time} and \emph{in batches} optimal solutions. Specifically, we compare to the following \cite{agatz2011dynamic,ashlagi2018maximum}:

\begin{itemize}
	\item \textbf{Just-in-time Max Weight Matching (JiTMWM)}: Waits until a request becomes critical and then computes a maximum-weight matching of all the current requests, i.e $\mathcal{N} = \mathcal{R} = current$.
	\item \textbf{Batching Max Weight Matching (BMWM)}: Waits $x$ time-steps and then computes a maximum-weight matching of all the current requests, i.e $\mathcal{N} = \mathcal{R} = current$.
	\item \textbf{Batching Greedy (BG)}: Waits $x$ time-steps and then greedily matches current requests, i.e $\mathcal{N} = \mathcal{R} = current$ (ties are broken randomly). Unmatched open requests are removed. For batch size $x = 1$ we get the simple greedy approach where requests are matched as soon as they appear.
\end{itemize}

\begin{figure}[t!]
	\centering
	\begin{subfigure}[t]{0.5\textwidth}
		\centering
		\includegraphics[width = 1 \linewidth]{./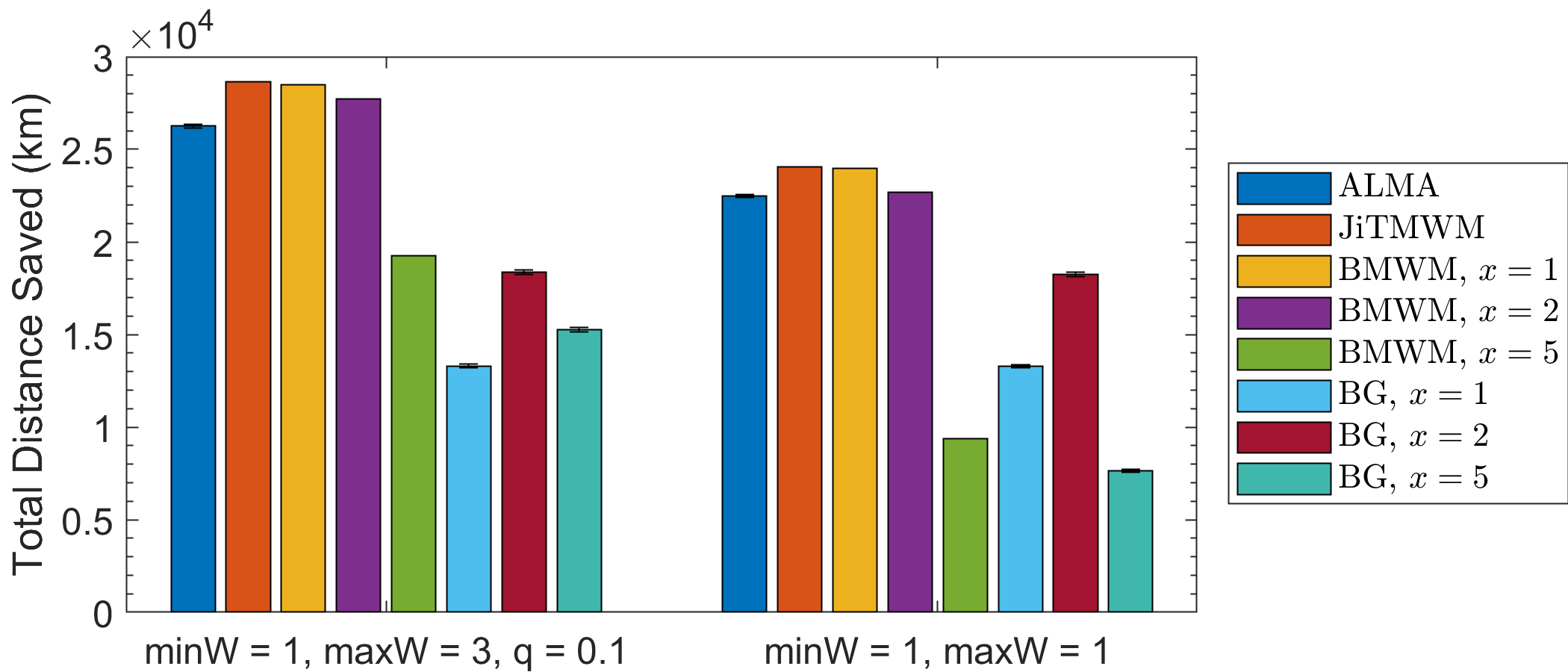}
		\caption{}
		\label{fig: testcase3_bar}
	\end{subfigure}
	~ \\ 
	\begin{subfigure}[t]{0.5\textwidth}
		\centering
		\includegraphics[width = 1 \linewidth]{./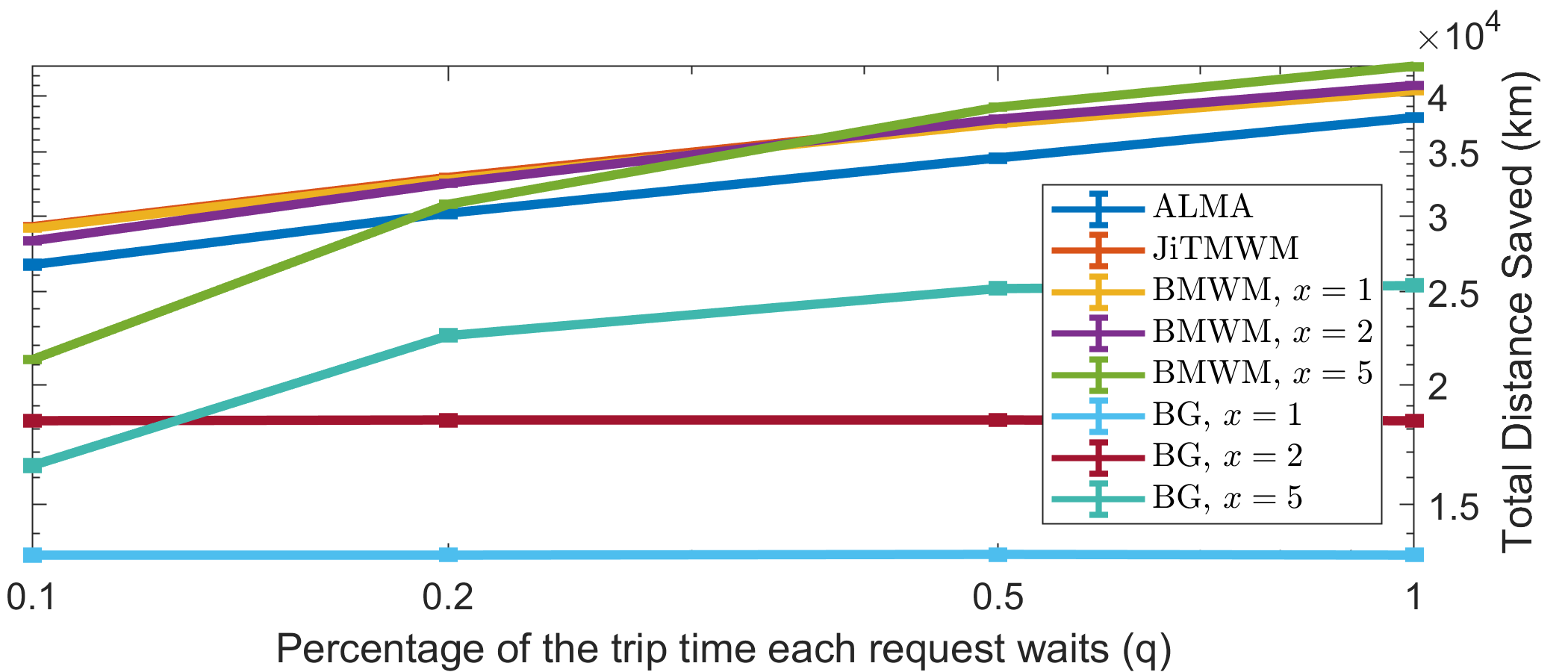}
		\caption{}
		\label{fig: testcase3_graph}
	\end{subfigure}%
	\caption{Total distance saved (km) for various values of $minW, maxW, q$. (\ref{fig: testcase3_bar}, left) Pragmatic scenario, (\ref{fig: testcase3_bar}, right) Requests become critical in just one time-step, (\ref{fig: testcase3_graph}) Various levels of waiting time ($q$) assuming no bounds, i.e., $minW = 0, maxW = \infty$, (double log. scale).}
	\label{fig: testcase 3}
\end{figure}

\begin{table*}[t]
	\centering
	\caption{Empirical Competitive Ratio.}
	\begin{tabular}{@{}lcccccccc@{}}
	\toprule
	$(minW, maxW, q)$  & ALMA   & JiTMWM & BMWM, $x = 1$ & BMWM, $x = 2$ & BMWM, $x = 5$ & BG, $x = 1$ & BG, $x = 2$ & BG, $x = 5$ \\ \midrule
	$(1, 3, 0.1)$      & 0.7890 & 0.8621 & 0.8568        & 0.8283        & 0.5883        & 0.3991      & 0.5422      & 0.4603      \\
	$(1, 1, -)$        & 0.8491 & 0.9243 & 0.9190        & 0.8663        & 0.3810        & 0.5112      & 0.6891      & 0.3038      \\
	$(0, \infty, 0.1)$ & 0.7835 & 0.8528 & 0.8486        & 0.8211        & 0.6221        & 0.3900      & 0.5299      & 0.4840      \\
	$(0, \infty, 0.2)$ & 0.7546 & 0.8207 & 0.8158        & 0.8000        & 0.7616        & 0.3399      & 0.4604      & 0.5688      \\
	$(0, \infty, 0.5)$ & 0.6939 & 0.7439 & 0.7368        & 0.7448        & 0.7668        & 0.2731      & 0.3714      & 0.5133      \\
	$(0, \infty, 1.0)$ & 0.6695 & 0.7390 & 0.7254        & 0.7343        & 0.7706        & 0.2440      & 0.3306      & 0.4606      \\
	\bottomrule
	\end{tabular}
	\label{table: empirical competitive ratio}
\end{table*}

\subsubsection{Efficiency}

Starting with the social welfare, Fig. \ref{fig: testcase 3} presents the total distance saved (km) for various values of $minW, maxW$, and $q$. ALMA loses $8.3\%$ of SW in the pragmatic scenario of Fig. \ref{fig: testcase3_bar} (left), and $6.5\%$ when the requests become critical in just one time-step (Fig. \ref{fig: testcase3_bar} (right)). If no bounds are placed on the minimum and maximum waiting time (i.e., $minW = 0, maxW = \infty$), ALMA exhibits loss of $8 - 11.5 \%$, for various values of $q$ (Fig. \ref{fig: testcase3_graph}). The above are compared to the best performing benchmark on each scenario (JiTMWM, or BMWM, $x = 5$). Moreover, it significantly outperforms every greedy approach. In the first scenario the BGs lose between $35.8 - 53.5 \%$, in the second between $24 - 68.2 \%$, and in the third between $31.5 - 69 \%$. 

Once more, it is worth noting that ALMA requires just a broadcast of a single bit to indicate the occupancy of a resource, while the compared approaches require either \emph{message exchange} for sharing the utility table, or the use of a \emph{centralized authority}. For example, the greedy solution would require message exchange to communicate users' preferences and resolve collisions in a decentralized setting, and every batching approach would require a common centralized synchronization clock.

Table \ref{table: empirical competitive ratio} presents the empirical competitive ratio for the first day of the week of the employed dataset. As we can see, even in the extreme, unlikely scenarios where we assume that people would be willing to wait for more than 10 or 20 minutes (which correspond to large values of $q$), ALMA achieves high relative efficiency, compared to the off-line (infeasible) benchmark. These scenarios are favorable to the off-line optimal, because requests stay longer in the system and therefore the algorithm takes more advantage of its foreseeing capabilities (hence, the drop in the competitive ratios is observed in \emph{all} of the employed algorithms). In particular, ALMA achieves an empirical competitive ratio of $0.67$ for $q = 1$ and even better ratios for more realistic scenarios (as large as $0.85$). The just-in-time and batch versions of the maximum weight matching perform slightly better, but this is to be expected, as they compute the maximum weight matching on the graphs of the current requests. In spite of the unpredictability of the on-line setting, and the dynamic nature of the demand, ALMA is consistently able to exhibit high performance, in all of the employed scenarios.

\section{Conclusion} \label{Conclusion}

Algorithms for solving the assignment problem, whether centralized or distributed, have runtime that increases with the total problem size, even if agents are interested in a small number of resources. Thus, they can only handle problems of some bounded size. Moreover, they require a significant amount of inter-agent communication. Humans on the other hand are routinely called upon to coordinate in large scale in their everyday lives, and are able to fast and robustly match with resources under dynamic and unpredictable demand. One driving factor that facilitates human cooperation is the principle of altruism. Inspired by human behavior, we have introduced a novel anytime heuristic (ALMA) for weighted matching in both bipartite and non-bipartite graphs. ALMA is decentralized and requires agents to only receive partial feedback of success or failure in acquiring a resource. Furthermore, the running time of the heuristic is constant in the total problem size, under reasonable assumptions on the preference domain of the agents. As autonomous agents proliferate (e.g., IoT devices, intelligent infrastructure, autonomous vehicles, etc.), having robust algorithms that can scale to hundreds of thousands of agents is of utmost importance.

The presented results provide an empirical proof of the high quality of the achieved solution in a variety of scenarios, including both synthetic and \emph{real} data, time constraints and on-line settings. Furthermore, both the proven theoretical bound (which guarantees constant convergence time), and the computation speed comparison (which grounds ALMA to a proven fast centralized algorithm), argue for its applicability to large scale, real world problems. As future work, it would be interesting to identify meaningful domains in which ALMA can provide provable worst-case performance guarantees, as well as to empirically evaluate its performance on other real datasets, corresponding to important real-world, large-scale problems.

\bibliographystyle{alpha}
\bibliography{bibliography}

\clearpage

\appendix
\section{Appendix}

\subsection{Proof of Theorem \ref{Th: convergence system}} \label{Convergence system proof}

\begin{customthm}{\ref{Th: convergence system}}
	For $N$ agents and $R$ resources, the expected number of steps until the system of agents following Alg. \ref{algo: learning rule} converges to a complete matching is bounded by (\ref{Eq: convergence bound system appendix}), where $p^* = f(loss^*)$ and $loss^*$ is given by Eq. \ref{Eq: loss* system appendix}.

	\begin{equation} \label{Eq: convergence bound system appendix}
		\mathcal{O}\left( R \frac{2 - p^*}{2 (1 - p^*)} \left(\frac{1}{p^*} \log N + R \right) \right) 
	\end{equation}\normalsize

	\begin{equation} \label{Eq: loss* system appendix}
		loss^* = \underset{loss_n^r}{\argmin} \left( \underset{r \in \mathcal{R}, n \in \mathcal{N}}{\min}(loss_n^r), 1 - \underset{r \in \mathcal{R}, n \in \mathcal{N}}{\max}(loss_n^r) \right)
	\end{equation}\normalsize
\end{customthm}

In this section we provide a formal proof of Theorem \ref{Th: convergence system}. \footnote{The proof is an adaptation of the convergence proof of \cite{cigler2011reaching} and \cite{DANASSIS:2019}.} To facilitate the proof, we will initially assume that every agent, on every collision, backs-off with the same constant probability, i.e.,:

\begin{equation} \label{Eq: constant p}
	P_n(r, \prec_n) = p > 0, \forall n \in \mathcal{N}, \forall r \in \mathcal{R}
\end{equation}\normalsize

\subsubsection{Case \#1: Multiple Agents, Single resource ($R = 1$)} \label{Case 1}

We will describe the execution of the proposed learning rule as a discrete time Markov chain (DTMC) \footnote{For an introduction on Markov chains see \cite{norris1998markov}}. In every time-step, each agent performs a Bernoulli trial with probability of `success' $1 - p$ (remain in the competition), and failure $p$ (back-off). When $N$ agents compete for a single resource, a state of the system is a vector $\{0, 1\}^N$ denoting the individual agents that still compete for that resource. But, since the back-off probability is the same for everyone (Eq. \ref{Eq: constant p}), we are only interested in how many agents are competing and not which ones. Thus, in the single resource case ($R = 1$), we can describe the execution of the proposed algorithm using the following chain:

\begin{definition} \label{def: markov chain X}
	Let $\{X_t\}_{t \geq 0}$ be a DTMC on state space $S = \{0, 1, \ldots ,N\}$ denoting the number of agents still competing for the resource. The transition probabilities are as follows:

	\begin{align*}
		& Pr(X_{t + 1} = N | X_t = 0) = 1 & \text{restart} \\
		& Pr(X_{t + 1} = 1 | X_t = 1) = 1 & \text{absorbing} \\
		& Pr(X_{t + 1} = j | X_t = i) = \binom{i}{j} p^{i - j} (1 - p)^j & i > 1, j \leq i \\
	\end{align*}\normalsize
	\noindent
	(all the other transition probabilities are zero)
\end{definition}

Intuitively, this Markov chain describes the number of individuals in a decreasing population, but with two caveats: The goal (absorbing state) is to reach a point where only one individual remains, and if we reach zero, we restart.

Before proceeding with Theorem \ref{Th: convergence system}'s convergence proof, we will restate Mityushin's Theorem \cite{rego1992naive} for hitting time bounds in Markov chains, define two auxiliary DTMCs, and prove two auxiliary lemmas.

\begin{theorem} \label{Th: Mityushin} (Mityushin's Theorem \cite{rego1992naive})
	Let $A = \{0\}$ be the absorbing state of a Markov chain $\{X_t\}_{t \geq 0}$. If $\mathds{E}(X_{t + 1} | X_t = i) < \frac{i}{\beta}$, $\forall i \geq 1$ and some $\beta > 1$, then:

	\begin{equation} \label{Eq: Mityushin}
		\mathds{E}(T_i^A) < \lceil \log_{\beta} i \rceil + \frac{\beta}{\beta - 1}
	\end{equation}\normalsize
	where $T_i^A$ denotes the hitting time of a state in $A$, starting from state $i$.
\end{theorem}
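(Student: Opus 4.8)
The plan is to combine a geometric-contraction estimate for the expected state value with Markov's inequality, and then recover the expected hitting time from its tail probabilities. First I would promote the one-step hypothesis into a statement about the \emph{unconditional} expectation. Since $0$ is absorbing, $\mathds{E}(X_{t+1} \mid X_t = 0) = 0$, so together with the assumption $\mathds{E}(X_{t+1}\mid X_t = i) < i/\beta$ for $i \ge 1$ we obtain $\mathds{E}(X_{t+1}\mid X_t = i) \le i/\beta$ for \emph{every} $i \ge 0$. Applying the tower property then gives $\mathds{E}(X_{t+1}) = \mathds{E}(\mathds{E}(X_{t+1}\mid X_t)) \le \mathds{E}(X_t)/\beta$, and iterating from $X_0 = i$ yields the clean bound $\mathds{E}(X_t) \le i\,\beta^{-t}$.

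Next I would pass from expectations to the hitting-time tail. Because $0$ is absorbing, the event of not having hit $A = \{0\}$ by time $t$ is exactly $\{X_t \ge 1\}$, i.e. $\{T_i^A > t\} = \{X_t \ge 1\}$. Since $X_t$ is a nonnegative integer-valued random variable, Markov's inequality at threshold $1$ converts the contracting expectation into a tail bound: $Pr(T_i^A > t) = Pr(X_t \ge 1) \le \mathds{E}(X_t) \le i\,\beta^{-t}$.

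Finally I would assemble the claimed bound using the identity $\mathds{E}(T_i^A) = \sum_{t \ge 0} Pr(T_i^A > t)$. Setting $t_0 = \lceil \log_\beta i \rceil$, I split the sum at $t_0$. Each of the first $t_0$ terms is a probability, hence bounded by $1$, contributing at most $t_0$ to the total. For $t \ge t_0$ I use the geometric estimate $Pr(T_i^A > t) \le i\,\beta^{-t}$; since $\beta^{t_0} \ge i$ by the choice of $t_0$, the tail sums as $\sum_{t \ge t_0} i\,\beta^{-t} = i\,\beta^{-t_0}\cdot \frac{\beta}{\beta-1} \le \frac{\beta}{\beta-1}$. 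Adding the two pieces gives $\mathds{E}(T_i^A) \le \lceil \log_\beta i\rceil + \frac{\beta}{\beta-1}$, exactly the stated inequality (with the strict version inherited from the strict hypothesis).

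The main obstacle is the bookkeeping around the split point $t_0$: it must be chosen so that the geometric tail begins precisely where $i\,\beta^{-t} \le 1$, so that the head contributes only $\lceil \log_\beta i\rceil$ while the tail contributes the bounded constant $\frac{\beta}{\beta-1}$. The two structural facts that make the reduction go through are that $0$ is absorbing (giving $\{T_i^A > t\} = \{X_t \ge 1\}$) and that Markov's inequality on the integer-valued $X_t$ turns a bound on $\mathds{E}(X_t)$ into a bound on $Pr(X_t \ge 1)$; the contraction step itself is routine once the $i = 0$ boundary case is folded in.
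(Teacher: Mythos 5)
Your argument is correct. Note first that the paper does not actually prove this statement: it is imported verbatim as Mityushin's Theorem from the cited reference \cite{rego1992naive} and used as a black box (in Lemma \ref{lm: hitting time of Z}, with $\beta = \frac{1}{1-p}$), so there is no in-paper proof to compare against. Your self-contained derivation is the standard multiplicative-drift argument, and each step checks out: folding in the absorbing state $0$ to get $\mathds{E}(X_{t+1}\mid X_t=i)\le i/\beta$ for all $i\ge 0$, iterating the tower property to obtain $\mathds{E}(X_t)\le i\beta^{-t}$, using absorption to identify $\{T_i^A>t\}=\{X_t\ge 1\}$ and Markov's inequality at threshold $1$, and finally splitting $\mathds{E}(T_i^A)=\sum_{t\ge 0}\Pr(T_i^A>t)$ at $t_0=\lceil\log_\beta i\rceil$, where $i\beta^{-t_0}\le 1$ makes the geometric tail sum to at most $\frac{\beta}{\beta-1}$. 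Two small points worth making explicit if you write this up: the argument tacitly requires the state space to consist of nonnegative values with $0$ the only state below $1$ (true for the chains on $\{0,1,\dots,N\}$ to which the paper applies the theorem, and needed both for Markov's inequality and for the identification of the tail event), and the strictness of the final inequality deserves one line (for $i\ge 1$ the hypothesis gives $\mathds{E}(X_t)<i\beta^{-t}$ strictly for all $t\ge 1$, which makes the tail sum strictly less than $\frac{\beta}{\beta-1}$). Neither is a gap; your proof is a legitimate, elementary replacement for the external citation.
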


\begin{definition} \label{def: markov chain Y}
	Let $\{Y_t\}_{t \geq 0}$ be a DTMC on state space $S = \{0, 1, \ldots ,N\}$ with the following transition probabilities (two absorbing states, 0 and 1):

	\begin{align*}
		& Pr(Y_{t + 1} = 0 | Y_t = 0) = 1 & \text{absorbing} \\
		& Pr(Y_{t + 1} = 1 | Y_t = 1) = 1 & \text{absorbing} \\
		& Pr(Y_{t + 1} = j | Y_t = i) = \binom{i}{j} p^{i - j} (1 - p)^j & i > 1, j \leq i \\
	\end{align*}\normalsize
	\noindent
	(all the other transition probabilities are zero)
\end{definition}

\begin{definition} \label{def: markov chain Z}
	Let $\{Z_t\}_{t \geq 0}$ be a DTMC on state space $S = \{0, 1, \ldots ,N\}$ with the following transition probabilities (state 0 the only absorbing state):

	\begin{align*}
		& Pr(Z_{t + 1} = 0 | Z_t = 0) = 1 & \text{absorbing} \\
		& Pr(Z_{t + 1} = j | Z_t = i) = \binom{i}{j} p^{i - j} (1 - p)^j & i \geq 1, j \leq i \\
	\end{align*}\normalsize
	\noindent
	(all the other transition probabilities are zero)
\end{definition}

\begin{lemma} \label{lm: hitting time of Z}
	The expected hitting time of the set of absorbing states $A = \{0\}$, starting from state $Z_0 = N$, of the DTMC $\{Z_t\}$ of Definition \ref{def: markov chain Z} is bounded by $\mathcal{O}\left( \frac{1}{p} \log N \right)$.
\end{lemma}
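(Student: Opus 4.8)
The plan is to apply Mityushin's Theorem (Theorem \ref{Th: Mityushin}) directly to the chain $\{Z_t\}$, whose unique absorbing state is exactly $A = \{0\}$, so that the hypotheses of the theorem match the quantity we want to bound. The first step is to compute the one-step drift. Conditioned on $Z_t = i$ with $i \geq 1$, the transition law $\binom{i}{j} p^{i-j}(1-p)^j$ is precisely a binomial distribution $\mathrm{Bin}(i, 1-p)$: each of the $i$ agents still competing independently remains with probability $1-p$ and backs off with probability $p$. Hence $\mathds{E}(Z_{t+1} \mid Z_t = i) = i(1-p)$, a clean geometric contraction of the population by a factor $(1-p)$ per step.

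Next I would verify the contraction hypothesis of Theorem \ref{Th: Mityushin}. Writing $i(1-p) = i/\beta$ with $\beta = 1/(1-p)$, and noting that $\beta > 1$ for $p \in (0,1)$, the drift bound $\mathds{E}(Z_{t+1}\mid Z_t = i) \leq i/\beta$ holds for every $i \geq 1$. The only delicate point is that Theorem \ref{Th: Mityushin} requires the \emph{strict} inequality, whereas here the drift meets the threshold with equality; this is resolved by applying the theorem for any $\beta$ with $1 < \beta < 1/(1-p)$ and letting $\beta \uparrow 1/(1-p)$, which leaves the asymptotic bound unchanged. (The boundary case $p = 1$ is trivial, since then $Z_{t+1} = 0$ with probability one, giving hitting time $1$.) With this in hand, Theorem \ref{Th: Mityushin} yields $\mathds{E}(T_N^A) < \lceil \log_\beta N \rceil + \frac{\beta}{\beta - 1}$.

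Finally I would read off the two terms. The additive term equals $\frac{1}{p}$ exactly, since $\beta - 1 = p/(1-p)$ gives $\frac{\beta}{\beta-1} = \frac{1/(1-p)}{p/(1-p)} = \frac{1}{p}$. For the logarithmic term, $\log_\beta N = \frac{\ln N}{-\ln(1-p)}$, and the elementary inequality $-\ln(1-p) \geq p$ (valid for $p \in (0,1)$) gives $\log_\beta N \leq \frac{\ln N}{p}$. Combining, $\mathds{E}(T_N^A) < \frac{1}{p}\ln N + 1 + \frac{1}{p} = \mathcal{O}\!\left(\frac{1}{p}\log N\right)$, as claimed. The main obstacle here is not analytic depth but careful bookkeeping: recognizing the binomial structure of the kernel so the drift is exactly $i(1-p)$, and then massaging the borderline equality $i(1-p) = i/\beta$ so that Mityushin's strict hypothesis is satisfied. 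Everything else is a direct substitution into (\ref{Eq: Mityushin}) together with the standard estimate $-\ln(1-p) \geq p$.
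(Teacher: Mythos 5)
Your proof is correct and follows essentially the same route as the paper's: identify the transition kernel from state $i$ as $\mathrm{Bin}(i,1-p)$, compute the drift $\mathds{E}(Z_{t+1}\mid Z_t=i)=i(1-p)$, and invoke Mityushin's Theorem with $\beta = 1/(1-p)$. Your extra step of handling the borderline equality $i(1-p)=i/\beta$ against the theorem's strict hypothesis (by taking $\beta\uparrow 1/(1-p)$) is a detail the paper's proof silently skips, and your explicit evaluation of the two terms is a welcome addition, but the argument is the same.
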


\begin{proof}
	If the DTMC $\{Z_t\}$ is in state $Z_t = i$, the next state $Z_{t + 1}$ is drawn from a binomial distribution with parameters $(i, 1 - p)$. Thus, the expected next state is $\mathds{E}(Z_{t + 1} | Z_t = i) = i(1 - p)$. Using Theorem \ref{Th: Mityushin} with $\beta = \frac{1}{1 - p}$ results in the required bound:

	\begin{equation}
		\mathds{E}(T_N^A) = \mathcal{O}\left( \frac{1}{p} \log N \right)
	\end{equation}\normalsize
\end{proof}

\begin{corollary} \label{lm: hitting time of Y}
	The expected hitting time of the set of absorbing states $A = \{0, 1\}$, starting from state $Y_0 = N$, of the DTMC $\{Y_t\}$ of Definition \ref{def: markov chain Y} is bounded by $\mathcal{O}\left( \frac{1}{p} \log N \right)$.
\end{corollary}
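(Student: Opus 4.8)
The plan is to reduce the corollary to Lemma \ref{lm: hitting time of Z} by exploiting the fact that the chains $\{Y_t\}$ and $\{Z_t\}$ differ \emph{only} in their behavior at state $1$. For every state $i > 1$, Definitions \ref{def: markov chain Y} and \ref{def: markov chain Z} prescribe identical transition laws (the same binomial kernel $\binom{i}{j}p^{i-j}(1-p)^j$); the sole discrepancy is that state $1$ is absorbing in $\{Y_t\}$, whereas in $\{Z_t\}$ it moves to $0$ with probability $p$ and stays at $1$ with probability $1-p$. The key idea is that this discrepancy is invisible to the hitting time of the set $A = \{0,1\}$, because that set already \emph{contains} state $1$.

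First I would make this precise with a coupling. Started from $Y_0 = Z_0 = N$, I would run the two chains on the same randomness as long as both remain in states strictly greater than $1$; since their kernels agree there, the coupled trajectories coincide step by step until the first instant they enter $\{0,1\}$, and at that instant both chains sit in the same state. Hence the hitting time $T_N^{A}$ of $A=\{0,1\}$ is the \emph{same} random variable for $\{Y_t\}$ and for $\{Z_t\}$, so in particular
\begin{equation}
	\mathds{E}\bigl(T_N^{A}\ \text{in}\ \{Y_t\}\bigr) = \mathds{E}\bigl(T_N^{A}\ \text{in}\ \{Z_t\}\bigr).
\end{equation}
Next I would bound the right-hand side by the quantity already controlled in Lemma \ref{lm: hitting time of Z}. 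Because $\{0\} \subseteq \{0,1\}$, any trajectory of $\{Z_t\}$ that reaches the absorbing state $0$ must have entered $\{0,1\}$ at an earlier or equal time; thus pathwise $T_N^{\{0,1\}} \leq T_N^{\{0\}}$ in $\{Z_t\}$, and taking expectations gives $\mathds{E}(T_N^{\{0,1\}}) \leq \mathds{E}(T_N^{\{0\}}) = \mathcal{O}\left(\frac{1}{p}\log N\right)$. Chaining the two estimates yields the claimed bound.

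There is essentially no hard step here: the argument is a coupling and monotonicity observation rather than a fresh computation, and the only point requiring a little care is to state precisely that altering the dynamics at state $1$ cannot affect the hitting time of a set that already contains state $1$. The heavy lifting — the $\mathcal{O}(\frac{1}{p}\log N)$ estimate obtained by applying Mityushin's Theorem (Theorem \ref{Th: Mityushin}) to the binomial mean $\mathds{E}(Z_{t+1}\mid Z_t=i)=i(1-p)$ with $\beta = \frac{1}{1-p}$ — was already carried out in Lemma \ref{lm: hitting time of Z}, so this corollary simply transports that estimate from the single-absorbing-state chain $\{Z_t\}$ to the two-absorbing-state chain $\{Y_t\}$.
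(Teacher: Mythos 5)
Your proposal is correct and follows essentially the same route as the paper: both reduce the claim to Lemma \ref{lm: hitting time of Z} by observing that $\{Y_t\}$ and $\{Z_t\}$ agree everywhere except at state $1$, so the hitting time of $\{0,1\}$ in $\{Y_t\}$ is dominated by the hitting time of $\{0\}$ in $\{Z_t\}$. Your two-step formulation (coupling to equate the hitting times of $\{0,1\}$ in both chains, then the pathwise monotonicity $T_N^{\{0,1\}} \leq T_N^{\{0\}}$ in $\{Z_t\}$) is in fact a slightly cleaner rendering of the paper's more informal path-by-path comparison.
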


\begin{proof}
	The expected hitting time of the absorbing state of $\{Z_t\}$ is an upper bound of the expected hitting time of $\{Y_t\}$. This is because any path that leads into state 0 in $\{Z_t\}$ either does not go through state 1 (thus happens with the same probability as in $\{Y_t\}$), or goes through state 1. But, state 1 in $\{Y_t\}$ is an absorbing state, hence in the latter case the expected hitting time for $\{Y_t\}$ would be one step shorter.
\end{proof}

Let $h_i^A$ denote the hitting probability of a set of states $A$, starting from state $i$. We will prove the following lemma.

\begin{lemma} \label{lm: probability of entering state 1 Y}
	The hitting probability of the absorbing state $\{1\}$, starting from any state $i \geq 1$, of the DTMC $\{Y_t\}$ of Definition \ref{def: markov chain Y} is given by Eq. \ref{Eq: hitting probability bound}. This is a tight lower bound.

	\begin{equation} \label{Eq: hitting probability bound}
		h_i^{\{1\}} = \Omega\left( \frac{2(1 - p)}{2 - p} \right), \forall i \geq 1
	\end{equation}\normalsize
\end{lemma}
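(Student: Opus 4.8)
The plan is to characterize $h_i^{\{1\}}$ directly through a first-step analysis of the chain $\{Y_t\}$ of Definition~\ref{def: markov chain Y}, and then to show that $\frac{2(1-p)}{2-p}$ is exactly the minimum of $h_i^{\{1\}}$ over $i \geq 1$, attained at $i = 2$. Writing $h_i := h_i^{\{1\}}$, conditioning on the first transition gives $h_0 = 0$, $h_1 = 1$, and for $i \geq 2$ the linear recurrence $h_i = \sum_{j=0}^{i} \binom{i}{j} p^{i-j}(1-p)^j h_j$. Isolating the self-loop term $j = i$ (which carries the factor $(1-p)^i$) and moving it to the left-hand side yields the workhorse identity $h_i\,(1 - (1-p)^i) = \sum_{j=1}^{i-1} \binom{i}{j} p^{i-j}(1-p)^j h_j$. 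A direct evaluation at $i = 2$ gives $h_2 = \frac{2p(1-p)}{1-(1-p)^2} = \frac{2(1-p)}{2-p}$, which simultaneously pins down the value of the bound and witnesses its tightness.

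First I would record the base cases $h_1 = 1 \geq h_2$ and $h_2 = \frac{2(1-p)}{2-p}$, and then prove $h_i \geq h_2$ for all $i \geq 2$ by strong induction using the identity above. Substituting the hypothesis $h_j \geq h_2$ for $2 \leq j \leq i-1$, treating the transition into the absorbing state $1$ (coefficient $ip^{i-1}(1-p)$, where $h_1 = 1$) separately, and using the binomial identity $\sum_{j=1}^{i-1}\binom{i}{j}p^{i-j}(1-p)^j = 1 - p^i - (1-p)^i$ to account for the total surviving mass, reduces the inductive step to a single scalar inequality. After simplification this inequality collapses to $p^{i}(1-p)\frac{i-2}{2-p} \geq 0$, which holds for every $i \geq 2$, completing the induction and hence the lemma; the $\Omega$ bound and its tightness then both follow from equality at $i = 2$.

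The main obstacle is that the naive induction fails: if one simply lower-bounds every surviving term $h_j$ (including $j = 1$) by $h_2$, the identity gives only $h_i \geq h_2 \cdot \frac{1 - p^i - (1-p)^i}{1-(1-p)^i}$, and since $\frac{1 - p^i - (1-p)^i}{1-(1-p)^i} < 1$ for $p > 0$, this estimate degrades with $i$ and the argument breaks. The essential idea is therefore to exploit the boundary term: the transition into state $1$ lands on the absorbing state with $h_1 = 1$, strictly exceeding $h_2$, and this surplus is exactly what compensates the loss and drives the residual to the nonnegative quantity $p^i(1-p)(i-2)/(2-p)$. Equivalently, one may run the identical computation on the complementary probability $q_i = 1 - h_i$ of absorption at $0$, where the only problematic contribution is the restart term $p^i q_0 = p^i$ and the target bound $q_i \leq \frac{p}{2-p}$ makes the same cancellation transparent; I would present whichever of the two bookkeeping schemes reads more cleanly.
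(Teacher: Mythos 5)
Your proposal is correct and follows essentially the same route as the paper's proof: first-step analysis of the hitting probabilities, isolating the self-loop term $(1-p)^i h_i$, the base case $h_2 = \tfrac{2(1-p)}{2-p}$ witnessing tightness, and a strong induction that keeps the $j=1$ boundary term exact (with $h_1 = 1$) rather than lower-bounding it by $h_2$ — the paper's final inequality chain reduces to $(2-i)(1-p)\leq 0$, which is precisely your nonnegative residual $p^i(1-p)(i-2)/(2-p)$ up to the positive factor $1/(1-(1-p)^i)$. Your explicit observation that the naive induction degrades and that the surplus from the absorbing state $1$ is what rescues it is a nice articulation of what the paper's algebra does implicitly.
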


\begin{proof}
	For simplicity we denote $h_i \overset{\Delta}{=} h_i^{\{1\}}$. We will show that for $p \in (0, 1)$, $h_i \geq \lambda = \frac{2(1 - p)}{2 - p}, \forall i \geq 1$ using induction. First note that since state $\{0\}$ is an absorbing state, $h_0 = 0$, $h_1 = 1 \geq \lambda$ and that $\lambda \in (0, 1)$.

	The vector of hitting probabilities $h^A = (h_i^A: i \in S = \{0, 1, \ldots ,N\})$ for a set of states $A$ is the minimal non-negative solution to the system of linear equations \ref{hitting probabilities system}:

	\begin{equation} \label{hitting probabilities system}
		\begin{cases}
			h_i^A = 1, &\text{ if } i \in A \\
			h_i^A = \underset{j \in S}{\sum} p_{ij}h_j^A, &\text{ if } i \notin A
		\end{cases}
	\end{equation}\normalsize

	By replacing $p_{ij}$ with the probabilities of Definition \ref{def: markov chain Y}, the system of equations \ref{hitting probabilities system} becomes:

	\begin{equation} \label{hitting probabilities}
		\begin{cases}
			h_i^A = 1, &\text{ if } i \in A \\
			h_i^A = \underset{j = 0}{\overset{i}{\sum}} \binom{i}{j} p^{i - j} (1 - p)^j h_j^A, &\text{ if } i \notin A
		\end{cases}
	\end{equation}\normalsize

	\noindent
	Base case:

	\begin{align*}
		h_2 &= (1 - p)^2 h_2 + 2p(1 -p) h_1 + p^2 h_0 = \frac{2p(1 - p)}{1 - (1 - p)^2} \\
			&= \frac{2(1 - p)}{2 - p}\geq \lambda
	\end{align*}\normalsize

	\noindent
	Inductive step: We assume that $\forall j \leq i - 1 \Rightarrow h_j \geq \lambda$. We will prove that $h_i \geq \lambda, \forall i > 2$.

	\begin{align*}
		h_i &= \underset{j = 0}{\overset{i}{\sum}} \binom{i}{j} p^{i - j} (1 - p)^j h_j \\
		&= p^i h_0 + i p^{i - 1} (1 - p) h_1 + \underset{j = 2}{\overset{i - 1}{\sum}} \binom{i}{j} p^{i - j} (1 - p)^j h_j \\
		&+ (1 - p)^i h_i \\
		&\geq p^i h_0 + i p^{i - 1} (1 - p) h_1 + \underset{j = 2}{\overset{i - 1}{\sum}} \binom{i}{j} p^{i - j} (1 - p)^j \lambda \\
		&+ (1 - p)^i h_i \\
		&= i p^{i - 1} (1 - p) + [1 - p^i - (1 - p)^i - i p^{i - 1} (1 - p)] \lambda \\
		&+ (1 - p)^i h_i \\
		\Rightarrow h_i &= \frac{i p^{i - 1} (1 - p) + [1 - p^i - (1 - p)^i - i p^{i - 1} (1 - p)] \lambda}{1 - (1 - p)^i} \\
		\Rightarrow h_i &= \lambda - \frac{p^i}{1 - (1 - p)^i} \lambda + \frac{i p^{i - 1} (1 - p)}{1 - (1 - p)^i} (1 - \lambda)
	\end{align*}\normalsize

	\noindent
	We want to prove that:

	\begin{align*}
		h_i &\geq \lambda \Rightarrow \\
		\frac{i p^{i - 1} (1 - p)}{1 - (1 - p)^i} (1 - \lambda) &\geq \frac{p^i}{1 - (1 - p)^i} \lambda \Rightarrow \\
		i p^{i - 1} (1 - p) &\geq [p^i + i p^{i - 1} (1 - p)] \lambda \Rightarrow \\
		\frac{i p^{i - 1} (1 - p) + p^i - p^i}{p^i + i p^{i - 1} (1 - p)} &\geq \lambda \Rightarrow \\
		1 - \frac{p^i}{p^i + i p^{i - 1} (1 - p)} &\geq \lambda \Rightarrow \\
		1 - \frac{p^i}{p^i + i p^{i - 1} (1 - p)} &\geq \frac{2(1 - p)}{2 - p} \Rightarrow \\
		1 - \frac{p^i}{p^i + i p^{i - 1} (1 - p)} &\geq 1 - \frac{p}{2 - p} \Rightarrow \\
		\frac{p^i}{p^i + i p^{i - 1} (1 - p)} &\leq \frac{p}{2 - p} \Rightarrow \\
		p^i (2 - p) &\leq p [p^i + i p^{i - 1} (1 - p)] \Rightarrow \\
		p^i (2 - p) &\leq p^i [p + i (1 - p)] \Rightarrow \\
		2 - 2p - i + ip &\leq 0 \Rightarrow \\
		2 - i - p(2 - i) &\leq 0 \Rightarrow \\
		(2 - i)(1 - p) &\leq 0 \Rightarrow \\
		2 - i &\leq 0 
	\end{align*}\normalsize

	\noindent
	which holds since $i > 2$.

	The above bound is also tight since $\exists i \in S: h_i = \lambda$, specifically $h_2 = \lambda$.
\end{proof}

Now we can prove the following theorem that bounds the convergence time of the DTCM of Definition \ref{def: markov chain X}, which corresponds to the proposed learning rule for the case of a single available resource ($R = 1$) and constant back-off probability.

\begin{theorem} \label{th: convergence, case R = 1}
	The expected hitting time of the set of absorbing states $A = \{1\}$ of the DTMC $\{X_t\}$ of Definition \ref{def: markov chain X}, starting from any initial state $X_0 \in \{0, 1, \ldots ,N\}$, is bounded by:

	\begin{equation} \label{Eq: complexity for R = 1}
		\mathcal{O}\left( \frac{2 - p}{2 p (1 - p)} \log N \right) 
	\end{equation}\normalsize
\end{theorem}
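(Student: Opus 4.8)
The plan is to exploit the observation that the chain $\{X_t\}$ of Definition~\ref{def: markov chain X} behaves \emph{exactly} like the chain $\{Y_t\}$ of Definition~\ref{def: markov chain Y} up until the first time it hits the set $\{0,1\}$: the transition probabilities from any state $i>1$ are identical, and the only difference is the fate of state $0$, which is absorbing in $\{Y_t\}$ but triggers a restart to $N$ in $\{X_t\}$. Thus I would view the evolution of $\{X_t\}$ as a sequence of \emph{epochs}, each of which is a run of the $\{Y_t\}$ dynamics until absorption into $\{0,1\}$. An epoch is a \emph{success} if it is absorbed at $1$ (in which case $\{X_t\}$ is absorbed and we are done) and a \emph{failure} if it is absorbed at $0$ (in which case $\{X_t\}$ restarts from $N$ and, by the memorylessness of the chain, a fresh epoch begins).

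The core of the argument is a one-step renewal decomposition. Writing $E_i$ for the expected hitting time of $\{1\}$ in $\{X_t\}$ started at $i$, $W_i$ for the expected absorption time of $\{Y_t\}$ into $\{0,1\}$ started at $i$, and $q_i = 1 - h_i^{\{1\}}$ for the probability that an epoch ends in failure, I would establish
\begin{equation}
    E_i = W_i + q_i\, E_N,
\end{equation}
since every epoch costs $W_i$ in expectation, and with the residual probability $q_i$ we pay for a restart from $N$. Specializing to $i=N$ and solving the resulting scalar equation yields
\begin{equation}
    E_N = \frac{W_N}{1 - q_N} = \frac{W_N}{h_N^{\{1\}}},
\end{equation}
which reduces the whole theorem to plugging in the two quantities already bounded. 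Corollary~\ref{lm: hitting time of Y} gives $W_N = \mathcal{O}\!\left(\frac{1}{p}\log N\right)$, and Lemma~\ref{lm: probability of entering state 1 Y} gives the uniform lower bound $h_N^{\{1\}} \geq \frac{2(1-p)}{2-p}$, so that
\begin{equation}
    E_N \leq \frac{\mathcal{O}\!\left(\frac{1}{p}\log N\right)}{\frac{2(1-p)}{2-p}} = \mathcal{O}\!\left( \frac{2-p}{2p(1-p)}\log N \right).
\end{equation}
For an arbitrary initial state $X_0=i$ the decomposition gives $E_i = W_i + q_i E_N \leq W_N + E_N$, and since $W_i \leq W_N$ by monotonicity of the absorption time in the number of competing agents, the same order bound holds uniformly over all starting states, covering the boundary cases $i=0$ (immediate restart, $E_0=E_N$) and $i=1$ (already absorbed, $E_1=0$).

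I expect the main obstacle to be the rigorous justification of the renewal decomposition, i.e., arguing that a restart to $N$ is probabilistically identical to a fresh start, so that the post-failure continuation contributes exactly $E_N$ and the epochs may be treated as independent repetitions of the $\{Y_t\}$ dynamics. This hinges on the strong Markov property invoked at the first-passage time into $\{0,1\}$; once that is in place, the equation $E_i = W_i + q_i E_N$ follows from a standard first-step analysis, and everything else is the direct substitution of the previously proven lemmas.
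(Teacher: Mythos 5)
Your proposal is correct and follows essentially the same route as the paper: the paper also decomposes the run of $\{X_t\}$ into passes of the $\{Y_t\}$ dynamics, uses Corollary \ref{lm: hitting time of Y} to bound the cost of each pass by $\mathcal{O}\left(\frac{1}{p}\log N\right)$ and Lemma \ref{lm: probability of entering state 1 Y} to bound the expected number of passes by $\frac{2-p}{2(1-p)}$, and multiplies. Your renewal equation $E_i = W_i + q_i E_N$ is just a more explicit rendering of that argument (modulo the immaterial extra step for the restart transition $0 \to N$), so nothing further is needed.
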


\begin{proof}
	Using Lemma \ref{lm: probability of entering state 1 Y} we can derive that the DTMC $\{X_t\}$ needs in expectation $\frac{1}{\lambda} = \frac{2 - p}{2 (1 - p)}$ passes until it hits state 1. Each pass requires $\mathcal{O}\left( \frac{1}{p} \log N \right)$ steps (Corollary \ref{lm: hitting time of Y}). Thus, the expected hitting time of state $A = \{1\}$ is $\mathcal{O}\left( \frac{2 - p}{2 p (1 - p)} \log N \right)$.
\end{proof}

\subsubsection{Case \#2: Multiple Agents, Multiple resources ($R > 1$)} \label{Case 2}

\begin{theorem} \label{Th: convergence, case R > 1}
	For $N$ agents and $R$ resources, assuming a constant back-off probability for each agent, i.e., $P_n(r, \prec_n) = p > 0, \forall n \in \mathcal{N}, \forall r \in \mathcal{R}$, the expected number of steps until the system of agents following of Alg. \ref{algo: learning rule} converges to a complete matching is bounded by (\ref{Eq: convergence time, constant p}).

	\begin{equation} \label{Eq: convergence time, constant p}
		\mathcal{O}\left( R \frac{2 - p}{2 (1 - p)} \left(\frac{1}{p} \log N + R \right) \right) 
	\end{equation}\normalsize
\end{theorem}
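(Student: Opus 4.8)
The plan is to lift the single-resource analysis of Theorem~\ref{th: convergence, case R = 1} to the multi-resource setting by organizing the execution into alternating \emph{rounds} and \emph{breaks}, exactly as anticipated by the proof sketch in the main text. First I would define a round as a maximal time interval during which no yielding agent encounters a free resource. The crucial observation is that within a round the population competing for each individual resource can only decrease: a yielding agent never claims a resource it perceives as occupied (step~8 of Alg.~\ref{algo: learning rule}), so each resource in isolation evolves exactly like the decreasing-population chain $\{Z_t\}$ of Definition~\ref{def: markov chain Z} (equivalently $\{Y_t\}$) until it reaches a state with $0$ or $1$ competitors. By Corollary~\ref{lm: hitting time of Y} every such resource is therefore settled in $\mathcal{O}\!\left(\frac{1}{p}\log N\right)$ steps, and since all $R$ resources evolve in parallel, an entire round also terminates in $\mathcal{O}\!\left(\frac{1}{p}\log N\right)$ steps.

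Next I would account for the breaks. Whenever a resource empties (all its competitors back off in the same step), the freed agents must relocate. Because resource selection is sequential over the agent's ordered list, $S_n(r_{\text{prev}},\prec_n)=r_{\text{prev}+1}$, a backed-off agent scans at most $R$ resources before locating a free one, so a break lasts $\mathcal{O}(R)$ steps in expectation. Thus one oscillation, a round followed by a break, costs $\mathcal{O}\!\left(\frac{1}{p}\log N + R\right)$ expected steps. It then remains to bound the number of oscillations. By Lemma~\ref{lm: probability of entering state 1 Y}, each pass over a contested resource ends with a single surviving agent (success) rather than emptying the resource (restart) with probability at least $\lambda = \frac{2(1-p)}{2-p}$, so a single resource is settled after $\frac{1}{\lambda} = \frac{2-p}{2(1-p)}$ oscillations in expectation, recovering the $R=1$ count.

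For $R>1$ I would invoke a monotonicity fact — a settled resource stays settled, because the occupying agent is thereafter perceived as having claimed it (step~8) and is never contested again — to charge the system as if its $R$ resources are resolved sequentially, each contributing $\mathcal{O}\!\left(\frac{2-p}{2(1-p)}\right)$ oscillations, for a total of $\mathcal{O}\!\left(R\frac{2-p}{2(1-p)}\right)$. Multiplying the oscillation count by the per-oscillation cost then yields the claimed bound $\mathcal{O}\!\left( R \frac{2-p}{2(1-p)}\left(\frac{1}{p}\log N + R\right)\right)$.

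I expect the main obstacle to be this last step. The $R$ resources are not independent: agents that vacate an emptied resource during a break migrate to and inflate the populations of other resources, so the clean geometric pass-count of the single-resource chain does not transfer verbatim. The care lies in showing that this coupling can be charged at most a multiplicative factor of $R$ and never worse — essentially that each oscillation makes monotone progress toward a complete matching, that the redistribution of backed-off agents can never un-settle an already matched resource, and that the total number of distinct settling events is capped at $R$. Establishing this worst-case accounting, rather than any single delicate estimate, is where the real work sits; the round- and break-length bounds follow directly from the already-proven lemmas.
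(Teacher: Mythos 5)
Your proposal follows the paper's own proof essentially verbatim: the same round/break decomposition, the same use of Corollary~\ref{lm: hitting time of Y} for the $\mathcal{O}\left(\frac{1}{p}\log N\right)$ round length and of Lemma~\ref{lm: probability of entering state 1 Y} for the $\frac{2-p}{2(1-p)}$ oscillation count, and the same multiplicative factor of $R$ to pass to multiple resources. The coupling issue you flag at the end is genuine, but the paper treats it no more rigorously than you do---it simply asserts the $\mathcal{O}\left(R\,\frac{2-p}{2(1-p)}\right)$ bound on the number of oscillations without further argument.
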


\begin{proof}
	At most $N$ agents can compete for each resource. We call this period a round. During a round, the number of agents competing for a specific resource monotonically decreases, since that resource is perceived as occupied by non-competing agents. Let the round end when either 1 or 0 agents compete for the resource. Corollary \ref{lm: hitting time of Y} states that in expectation this will require $\mathcal{O}\left( \frac{1}{p} \log N \right)$ steps.

	If all agents backed-off, it will take on average $R$ steps until at least one of them finds a free resource. We call this period a break.

	In the worst case, the system will oscillate between a round and a break. According to the above, one oscillation requires in expectation $\mathcal{O}\left( \frac{1}{p} \log N  + R\right)$ steps. If $R = 1$, Lemma \ref{lm: probability of entering state 1 Y} states that in expectation there will be $\frac{1}{\lambda} = \frac{2 - p}{2 (1 - p)}$ oscillations. For $R > 1$ the expected number of oscillations is bounded by $\mathcal{O}\left( R \frac{2 - p}{2 (1 - p)} \right)$. Thus, we derive the required bound (\ref{Eq: convergence time, constant p}).
\end{proof}

\subsubsection{Dynamic back-off Probability} \label{dynamic p}

So far we have assumed a constant back-off probability for each agent, i.e., $P_n(r, \prec_n) = p > 0, \forall n \in \mathcal{N}, \forall r \in \mathcal{R}$. In this section we will drop this assumption. Let $\psi = \max(\log N, R)$. Bound (\ref{Eq: convergence time, constant p}) becomes:

\begin{equation} \label{Eq: convergence time with psi}
	\mathcal{O}\left( \psi^2 \frac{2 - p}{2 (1 - p)} \left(\frac{1}{p} + 1 \right) \right) 
\end{equation}\normalsize

Intuitively, the worst case scenario corresponds to either all agents having a small back-off probability, thus they keep on competing for the same resource, or all of them having a high back-off probability, thus the process will keep on restarting. These two scenarios correspond to the inner ($\frac{1}{p}$) and outer ($\frac{2 - p}{2 (1 - p)}$) probability terms of bound (\ref{Eq: convergence time with psi}) respectively. We can rewrite the right part of bound (\ref{Eq: convergence time with psi}) as:

\begin{equation} \label{Eq: probabilities alternative form}
	\frac{2 - p}{2 (1 - p)} \left(\frac{1}{p} + 1 \right) = \frac{1}{p} + \frac{1}{1 - p} + \frac{1}{2} = \tau
\end{equation}\normalsize

As seen by Eq. \ref{Eq: probabilities alternative form}, $\tau$ assumes its maximum value on the two extremes, either with a high ($p \rightarrow 1^-$), or a low ($p \rightarrow 0^+$) back-off probability, i.e., $\underset{p \rightarrow 1^-}{\lim} \tau = \underset{p \rightarrow 0^+}{\lim} \tau = \infty$. Let $p^* = f(loss^*)$ be the worst between the smallest or highest back-off probability any agent $n \in \mathcal{N}$ can exhibit, i.e., having $loss^*$ given by Eq. \ref{Eq: loss*}. Using $p^*$ instead of the constant $p$, we bound the expected convergence time according to bound (\ref{Eq: convergence bound}).

\begin{equation} \label{Eq: convergence bound}
	\mathcal{O}\left( R \frac{2 - p^*}{2 (1 - p^*)} \left(\frac{1}{p^*} \log N + R \right) \right) 
\end{equation}\normalsize

\begin{equation} \label{Eq: loss*}
	loss^* = \underset{loss_n^r}{\argmin} \left( \underset{r \in \mathcal{R}, n \in \mathcal{N}}{\min}(loss_n^r), 1 - \underset{r \in \mathcal{R}, n \in \mathcal{N}}{\max}(loss_n^r) \right)
\end{equation}\normalsize

This concludes the proof of Theorem \ref{Th: convergence system}.

\end{document}